\newtheorem{definition}{Definition}
\newtheorem{proposition}[definition]{Proposition}
\newtheorem{lemma}[definition]{Lemma}
\newtheorem{theorem}[definition]{Theorem}
\newtheorem{corollary}[definition]{Corollary}
\newtheorem{conjecture}[definition]{Conjecture}
\newtheorem{remark}[definition]{Remark}
\newtheorem{example}[definition]{Example}
\newtheorem{question}[definition]{Question}
\def\bcj{\begin{conjecture}}
\def\ecj{\end{conjecture}}
\def\bcr{\begin{corollary}}
\def\ecr{\end{corollary}}
\def\bd{\begin{definition}}
\def\ed{\end{definition}}
\def\bea{\begin{eqnarray}}
\def\eea{\end{eqnarray}}
\def\bem{\begin{enumerate}}
\def\eem{\end{enumerate}}
\def\bex{\begin{example}}
\def\eex{\end{example}}
\def\bim{\begin{itemize}}
\def\eim{\end{itemize}}
\def\bl{\begin{lemma}}
\def\el{\end{lemma}}
\def\bma{\begin{bmatrix}}
\def\ema{\end{bmatrix}}
\def\bpf{\begin{proof}}
\def\epf{\end{proof}}
\def\bpp{\begin{proposition}}
\def\epp{\end{proposition}}
\def\bqu{\begin{question}}
\def\equ{\end{question}}
\def\br{\begin{remark}}
\def\er{\end{remark}}
\def\bt{\begin{theorem}}
\def\et{\end{theorem}}
\def\squareforqed{\hbox{\rlap{$\sqcap$}$\sqcup$}}
\def\qed{\ifmmode\squareforqed\else{\unskip\nobreak\hfil
\penalty50\hskip1em\null\nobreak\hfil\squareforqed
\parfillskip=0pt\finalhyphendemerits=0\endgraf}\fi}
\def\endenv{\ifmmode\;\else{\unskip\nobreak\hfil
\penalty50\hskip1em\null\nobreak\hfil\;
\parfillskip=0pt\finalhyphendemerits=0\endgraf}\fi}
\newenvironment{proof}{\noindent \textbf{{Proof.~} }}{\qed}
\def\Dbar{\leavevmode\lower.6ex\hbox to 0pt
{\hskip-.23ex\accent"16\hss}D}
\def\url@leostyle{%
  \@ifundefined{selectfont}{\def\UrlFont{\sf}}{\def\UrlFont{\small\ttfamily}}}
\def\bcj{\begin{conjecture}}
\def\ecj{\end{conjecture}}
\def\bcr{\begin{corollary}}
\def\ecr{\end{corollary}}
\def\bd{\begin{definition}}
\def\ed{\end{definition}}
\def\bea{\begin{eqnarray}}
\def\eea{\end{eqnarray}}
\def\bem{\begin{enumerate}}
\def\eem{\end{enumerate}}
\def\bex{\begin{example}}
\def\eex{\end{example}}
\def\bim{\begin{itemize}}
\def\eim{\end{itemize}}
\def\bl{\begin{lemma}}
\def\el{\end{lemma}}
\def\bpf{\begin{proof}}
\def\epf{\end{proof}}
\def\bpp{\begin{proposition}}
\def\epp{\end{proposition}}
\def\bqu{\begin{question}}
\def\equ{\end{question}}
\def\br{\begin{remark}}
\def\er{\end{remark}}
\def\bt{\begin{theorem}}
\def\et{\end{theorem}}
\def\btb{\begin{tabular}}
\def\etb{\end{tabular}}
\newcommand{\nc}{\newcommand}
\def\a{\alpha}
\def\b{\beta}
\def\z{\zeta}
\def\r{\rho}
\def\s{\sigma}
\def\G{\Gamma}
 \nc{\bbA}{\mathbb{A}} \nc{\bbB}{\mathbb{B}} \nc{\bbC}{\mathbb{C}}
 \nc{\bbD}{\mathbb{D}} \nc{\bbE}{\mathbb{E}} \nc{\bbF}{\mathbb{F}}
 \nc{\bbG}{\mathbb{G}} \nc{\bbH}{\mathbb{H}} \nc{\bbI}{\mathbb{I}}
 \nc{\bbJ}{\mathbb{J}} \nc{\bbK}{\mathbb{K}} \nc{\bbL}{\mathbb{L}}
 \nc{\bbM}{\mathbb{M}} \nc{\bbN}{\mathbb{N}} \nc{\bbO}{\mathbb{O}}
 \nc{\bbP}{\mathbb{P}} \nc{\bbQ}{\mathbb{Q}} \nc{\bbR}{\mathbb{R}}
 \nc{\bbS}{\mathbb{S}} \nc{\bbT}{\mathbb{T}} \nc{\bbU}{\mathbb{U}}
 \nc{\bbV}{\mathbb{V}} \nc{\bbW}{\mathbb{W}} \nc{\bbX}{\mathbb{X}}
 \nc{\bbZ}{\mathbb{Z}}
 \nc{\bA}{{\bf A}} \nc{\bB}{{\bf B}} \nc{\bC}{{\bf C}}
 \nc{\bD}{{\bf D}} \nc{\bE}{{\bf E}} \nc{\bF}{{\bf F}}
 \nc{\bG}{{\bf G}} \nc{\bH}{{\bf H}} \nc{\bI}{{\bf I}}
 \nc{\bJ}{{\bf J}} \nc{\bK}{{\bf K}} \nc{\bL}{{\bf L}}
 \nc{\bM}{{\bf M}} \nc{\bN}{{\bf N}} \nc{\bO}{{\bf O}}
 \nc{\bP}{{\bf P}} \nc{\bQ}{{\bf Q}} \nc{\bR}{{\bf R}}
 \nc{\bS}{{\bf S}} \nc{\bT}{{\bf T}} \nc{\bU}{{\bf U}}
 \nc{\bV}{{\bf V}} \nc{\bW}{{\bf W}} \nc{\bX}{{\bf X}}
 \nc{\bZ}{{\bf Z}}
\nc{\cA}{{\cal A}} \nc{\cB}{{\cal B}} \nc{\cC}{{\cal C}}
\nc{\cD}{{\cal D}} \nc{\cE}{{\cal E}} \nc{\cF}{{\cal F}}
\nc{\cG}{{\cal G}} \nc{\cH}{{\cal H}} \nc{\cI}{{\cal I}}
\nc{\cJ}{{\cal J}} \nc{\cK}{{\cal K}} \nc{\cL}{{\cal L}}
\nc{\cM}{{\cal M}} \nc{\cN}{{\cal N}} \nc{\cO}{{\cal O}}
\nc{\cP}{{\cal P}} \nc{\cQ}{{\cal Q}} \nc{\cR}{{\cal R}}
\nc{\cS}{{\cal S}} \nc{\cT}{{\cal T}} \nc{\cU}{{\cal U}}
\nc{\cV}{{\cal V}} \nc{\cW}{{\cal W}} \nc{\cX}{{\cal X}}
\nc{\cZ}{{\cal Z}}
\nc{\hA}{{\hat{A}}} \nc{\hB}{{\hat{B}}} \nc{\hC}{{\hat{C}}}
\nc{\hD}{{\hat{D}}} \nc{\hE}{{\hat{E}}} \nc{\hF}{{\hat{F}}}
\nc{\hG}{{\hat{G}}} \nc{\hH}{{\hat{H}}} \nc{\hI}{{\hat{I}}}
\nc{\hJ}{{\hat{J}}} \nc{\hK}{{\hat{K}}} \nc{\hL}{{\hat{L}}}
\nc{\hM}{{\hat{M}}} \nc{\hN}{{\hat{N}}} \nc{\hO}{{\hat{O}}}
\nc{\hP}{{\hat{P}}} \nc{\hR}{{\hat{R}}} \nc{\hS}{{\hat{S}}}
\nc{\hT}{{\hat{T}}} \nc{\hU}{{\hat{U}}} \nc{\hV}{{\hat{V}}}
\nc{\hW}{{\hat{W}}} \nc{\hX}{{\hat{X}}} \nc{\hZ}{{\hat{Z}}}
\nc{\hn}{{\hat{n}}}
\def\dim{\mathop{\rm Dim}}
\def\lin{\mathop{\rm span}}
\def\rank{\mathop{\rm rank}}
\def\tr{\mathop{\rm Tr}}
\def\ox{\otimes}
\newcommand{\bra}[1]{\langle#1|}
\newcommand{\ket}[1]{|#1\rangle}
\newcommand{\proj}[1]{| #1\rangle\!\langle #1 |}
\def\Dbar{\leavevmode\lower.6ex\hbox to 0pt
{\hskip-.23ex\accent"16\hss}D}
\begin{document}
\title{Inertia of two-qutrit entanglement witnesses}

\newdateformat{ukdate}{\ordinaldate{\THEDAY} \monthname[\THEMONTH] \THEYEAR}
\date{\ukdate\today}

\pacs{03.65.Ud, 03.67.Mn}

\author{Changchun Feng}
\affiliation{LMIB(Beihang University), Ministry of Education, and School of Mathematical Sciences, Beihang University, Beijing 100191, China}

\author{Lin Chen}\email[]{linchen@buaa.edu.cn (corresponding author)}
\affiliation{LMIB(Beihang University), Ministry of Education, and School of Mathematical Sciences, Beihang University, Beijing 100191, China}
\affiliation{International Research Institute for Multidisciplinary Science, Beihang University, Beijing 100191, China}

\author{Chang Xu} 
\email[]{20191209@stu.neu.edu.cn}
\affiliation{Science College, Northeastern University, Shenyang, 110000, China}

\author{Yi Shen}
\affiliation{LMIB(Beihang University), Ministry of Education, and School of Mathematical Sciences, Beihang University, Beijing 100191, China}
\affiliation{School of Science, Jiangnan University, Wuxi 214122, China}

\begin{abstract}

Entanglement witnesses (EWs) are a fundamental tool for the detection of
entanglement. We investigate the inertias of bipartite EWs constructed by the partial transpose of NPT states.  Furthermore, we find out most of the inertias of the partial transpose of the two-qutrit bipartite NPT states. As an application, we extend our results to high dimensional states.

\end{abstract}

\maketitle

\Large

\section{Introduction}

Einstein, Podolsky, Rosen, Schrodinger, et al discovered quantum entanglement. Then it became a remarkable feature of quantum mechanics. It lies in the heart of quantum information theory \cite{2007Quantum,GUHNE20091}. In recent decades, entanglement has been recognized as a kind of valuable resource \cite{2007Quantum,2019Quantum,Patricia2019Resource}. It is widely used in various quantum information processing tasks such as quantum computing \cite{2005Experimental}, teleportation \cite{2004Deterministic}, dense coding \cite{2002Quantum}, cryptography \cite{2020Entanglement}, and quantum key distribution \cite{Xu2020}.

Although several useful separability criteria such as positive-partial-transpose (PPT) criterion \cite{Peres1996,horodecki1997}, range criterion, and realignment criterion \cite{Rudolph2002Some} were developed, all of them cannot strictly distinguish between the set of entangled states and that of separable ones. According to PPT criterion, any state with non positive partial transpose (NPT) must be entangled. Nevertheless,  the converse only holds for two-qubit and qubit-qutrit  systems. It has been shown  that determining whether a bipartite state is entangled is an NP-hard problem \cite{2003Proceedings}. In $2000$, Terhal first introduced the term entanglement witness (EW) by indicating that a violation of a Bell inequality can be expressed as a witness for entanglement \cite{2000Entanglement}. Nowadays, EWs are a fundamental tool for the detection of entanglement both theoretically and experimentally. Actually, many EWs have been implemented with local measurements \cite{2000Entanglement,2019Design, 2020Measurement}.

As we know, the partial transpose of NPT state $\r$ is an EW. The negative eigenvalues of $\r^\G$ are a signature of entanglement. They are closely related to other problems in entanglement theory. The negativity  is a well-known computable entanglement measure \cite{2002Computable}. At the same time, it is the sum of the absolute values of negative eigenvalues of  $\r^\G$. Also, by the definition of $1$-distillable state \cite{Micha1998Mixed}, when $\r^\G$ has more negative eigenvalues, $\r$ is  more likely to be $1$-distillable. Thus, it is important to explore the negative eigenvalues of $\r^\G$. The problem of determining how many negative eigenvalues the partial transpose of NPT state could contain has attracted great interest \cite{2008Universal,2012Qubit,2013Negative,2013Non}. It was first specified in \cite{2008Universal} that $\r^\G$ has one negative eigenvalue and three positive eigenvalues for any two-qubit entangled state $\r$. Next the inertias of the partial transpose of $2\times n$ NPT states has been investigated in \cite{2020Inertias}. As far as we know, many quantum information queestions turn into intractable when $2\times n$ systems are replaced by $3\times 3$ systems, such as distillability problem \cite{Micha1998Mixed,1999Evidence}, distinguishable subspace problem \cite{2007On}, and separability problem \cite{2002Distinguishing}. 

Due to Sylvester theorem \cite{2011Matrix}, we introduce a tool to study the inertia of EW, namely the inertia of a Hermitian operator is invariant under SLOCC. Firstly, we introduce what the two-qutrit subspaces contain for subspaces with various dimensions  in Lemma \ref{le:9}. Then we explore a property of the inertia of the partial transpose of bipartite NPT states in Lemma \ref{le:add}. Further we investigate the number of positive eigenvalues of  bipartite EWs in Lemma \ref{le:c=2}. Then we give some observations on the relation between the number of positive eigenvalues and that of negative eigenvalues in Lemma \ref{le:negative}.  Then we propose two observations on the partial transpose of the two-qutrit  NPT states in Lemmas \ref{le:12} and \ref{le:13}, demonstrated by Example \ref{ex:11}. Then we discuss all cases in the partial transpose of the two-qutrit bipartite NPT state in Theorem \ref{th:NPT2}. We draw a conclusion about the inertias of (the partial transpose) of the two-qutrit bipartite NPT states in Theorem \ref{th:N_33}. Next we propose Example \ref{ex:13arrays} to illustrate Theorem \ref{th:N_33}(i). We propose Example \ref{ex:3arrays} to show the relationship between the inertias of $2\times 3$ states and those of $3\times 3$ states.  Finally, we extend some conclusions on the inertias  from  $2\times n$ states to $3\times n$ states in Lemma \ref{le:3n}.

The remainder of this paper is organized as follows. In section \ref{sec:p} we introduce the preliminaries by clarifying the notations and presenting necessary definitions and results. In section \ref{sec:I_1} we focus on the bipartite EWs constructed by (the partial transpose) of NPT state, and determine intertias of such EWs. In section \ref{sec:I_2} we investigate the inertias of the partial transpose of the two-qutrit bipartite NPT states. Finally, we conclude in section \ref{sec:con}. In section \ref{sec:app} we find out the relationship between the inertias of $2\times 3$ states and those of $2\times 3$ states and  extend some conclusions on the inertias  from  $2\times n$ states to $3\times n$ states. We also partially test the existence of two unverified inertias using python program.

\section{Preliminaries}
\label{sec:p}
 In this section we introduce the facts used in this paper. 
 We refer to $\mathbb{C}^{d}$ as the $d$-dimensional Hilbert space. We define $I_n$ as the  identity matrix of order $n$. We denote $\otimes^n_{i=1}\mathbb{C}^{d_i}:= \otimes^n_{i=1}\mathcal{H}_{A_i}$ as the n-partite Hilbert space, where $d_i$'s are local dimensions. Let $\mathbb{M}_n$ be the set of $n\times n$ matrices. If $\rho\in \mathcal{B}(\otimes^n_{i=1}\mathbb{C}^{d_i})$ is a positive semidefinite matrix of trace one, then $\rho$ is an $n$-partite quantum state. The partial transpose w.r.t. system A of a bipartite matrix $\rho \in \mathcal{B}(\mathcal{H}_A\ox \mathcal{H}_B)$ is definied as,
  $\rho^{\Gamma}:=\sum_{i,j}\langle c_i|_A\rho |c_j \rangle \ox |c_j \rangle \langle c_i|_A$,
 where the set of $|c_i \rangle$ is an arbitrary orthonormal basis in $\mathcal{H}_A$. If $\rho^{\Gamma}$ has at least one negative eigenvalue then $\rho$ has non-positive partial transpose (NPT).  If $\rho^{\Gamma}$ does not have negative eigenvalues then $\rho$ has positive partial transpose (PPT).  We refer to an $M\times N$ state as a bipartite $\rho$ such that rank $\rho_A=M$ and rank $\rho_B=N$. Given the Schmidt decomposition $|\psi\rangle =\sum_i\sqrt{c_i}|a_i,b_i\rangle$, we refer to the number of nonzero $c_i$ as the Schmidt rank of $\psi$. We denote the  number as SR($\psi$). There is an essential proposition for the matrix inertia, namely Sylvester theorem \cite{Horn1985}. It states that Hermitian matrices $A, B \in \bbM^n$
 have the same inertia if and only if there is a non-singular matrix $S$ such that $B = SAS^\dagger$. 
 
 In the following we review the entanglement witness (EW), inertia and stochastic local operations and classical communications (SLOCC) equivalence.

 \begin{definition}
 	\label{de:EW}
 	Suppose $W\in \mathcal{B}(\otimes^n_{i=1}\mathbb{C}^{d_i})$ is Hermitian. We say that W is an $n$-partite EW if 
 	
 	(i) it is non-positive semidefinite, 
 	
 	(ii) $\langle\psi|W|\psi\rangle\geq 0$ for any product vector $|\psi\rangle=\otimes^n_{i=1}|a_i\rangle$ with $|a_i\rangle \in \mathcal{H}_i$.
 \end{definition} 
 
 \begin{definition}
 	\label{de:in}
 	Let $A \in \mathbb{M}_n$ be Hermitian. The inertia of $A$, denoted by $\ln (A)$, is defined as the following sequence 
 	
 	\begin{eqnarray}
 	\label{eq:1}
 		\ln(A):=(v_-,v_0,v_+),
 	\end{eqnarray}
 	where $v_-,v_0$ and $v_+$ are respectively the numbers of negative, zero and positive eigenvalues of $A$.
 	
 	Note that if $\rho$ is PPT, then $v_-=0$.
 \end{definition}
 
 \begin{definition}
 	\label{de:SL}
 	Two $n$-partite pure state $|\alpha \rangle$, $|\beta \rangle$ are SLOCC equivalent if there exists a product invertible operation $Y=Y_1\ox...\ox Y_n$ such that $|\alpha \rangle =Y|\beta \rangle$. 
 \end{definition}
 
 Then we present several results for the state $\rho$ and the EWs \cite{2013Non,2020Inertias,Chen2013,3tensors1983}.
 \begin{lemma} \cite{3tensors1983}
 	\label{le:mn-2}
 	Suppose $\mathcal{V} \subseteq \bbC^m\ox\bbC^n$ is a bipartite Hilbert subspace. 
 	
 	(i) If $\dim(\mathcal{V})=mn-1$, then $\mathcal{V}$ is spanned by product vectors.
 	
 	(ii) If $\dim(\mathcal{V})=mn-2$, then $\mathcal{V}$ is either spanned by product vectors or up to SLOCC equivalence spanned by $\{\ket{0,0}+\ket{1,1},\ket{i,j}:(i,j)\neq(0,0),(0,1),(1,1)\}$.
 \end{lemma}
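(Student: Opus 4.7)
The plan is to prove both parts by SLOCC normalization followed by explicit construction of product-vector spanning sets inside $\cV$. For (i), $\cV^\perp$ is one-dimensional, spanned by a single nonzero $\ket{\phi}$. Since Schmidt rank is a complete $\GL_m\times\GL_n$ invariant for vectors, I first reduce $\ket{\phi}$ by SLOCC to the canonical form $\sum_{k=0}^{r-1}\ket{k,k}$ with $r=\mathrm{SR}(\phi)$. I then exhibit $mn-1$ linearly independent products in $\cV$: the $mn-r$ off-diagonal basis products $\ket{i,j}$ with $(i,j)\neq(k,k)$ for $0\leq k<r$, together with the $r-1$ vectors $\ket{k,k}-\ket{k+1,k+1}$ for $k=0,\dots,r-2$. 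Each such difference lies in the product span because $(\ket{k}+\ket{k+1})\ox(\ket{k}-\ket{k+1})$ is itself a product belonging to $\cV$ (its Schmidt-diagonal contribution $1-1=0$ vanishes), and subtracting the two off-diagonal products $\ket{k+1,k}-\ket{k,k+1}$ recovers $\ket{k,k}-\ket{k+1,k+1}$. The total $(mn-r)+(r-1)=mn-1$ matches $\dim\cV$, finishing (i).

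For (ii), I would view $\cV^\perp$ as a two-dimensional pencil of $m\times n$ matrices. Under the combined action of $\GL_m\times\GL_n$ (by left-right multiplication) and an arbitrary change of pencil basis---all realizable as SLOCC on $\cV^\perp$---Kronecker's canonical form decomposes such pencils into direct sums of regular Jordan blocks (at finite or infinite eigenvalues) and singular $L_\epsilon,L_\eta^T$ blocks. Each SLOCC orbit of $\cV^\perp$ corresponds to exactly one Kronecker pattern. I would enumerate these patterns and, in each, either construct $mn-2$ independent products in $\cV$ by the same diagonal-pair trick as in (i), or recognize the pattern as the exceptional bad case.

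The decisive observation is that the unique Kronecker pattern whose $\cV$ is not spanned by products is a single $2\times 2$ regular Jordan block, padded by zero rows and columns out to size $m\times n$. Up to SLOCC this is precisely $\cV^\perp=\lin\{\ket{0,0}-\ket{1,1},\ket{0,1}\}$, whose orthogonal complement has the stated basis. The failure is direct: any product $\ket{a}\ox\ket{b}\in\cV$ satisfies $a_0b_1=0$ and $a_0b_0=a_1b_1$, which forces the product span inside $\cV$ to lie in $\lin\{\ket{i,j}:(i,j)\neq(0,0),(0,1),(1,1)\}$, of dimension $mn-3<mn-2=\dim\cV$. I expect the main obstacle to be completing the case-by-case verification that every other Kronecker pattern---multiple Jordan blocks, $L_\epsilon$ or $L_\eta^T$ blocks, Jordan blocks of size $\geq 3$, and their combinations---admits a spanning set of $mn-2$ products in $\cV$; mild modifications of the (i)-construction should suffice in each case.
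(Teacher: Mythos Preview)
The paper does not supply its own proof of this lemma; it is stated with a citation to \cite{3tensors1983} and used as a black box. So there is no paper proof to compare against.

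Your argument for (i) is complete and correct: after SLOCC-normalizing the single vector $\ket{\phi}\in\cV^\perp$ to $\sum_{k<r}\ket{k,k}$, the $mn-r$ off-diagonal standard products together with the $r-1$ diagonal differences $\ket{k,k}-\ket{k+1,k+1}$ (each of which you correctly express as a combination of three products in $\cV$) give $mn-1$ independent vectors in the product span, so $\cV$ is spanned by products.

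For (ii), your strategy via the Kronecker canonical form of the two-dimensional matrix pencil $\cV^\perp$ is exactly the classical route and is what the cited reference does. Your identification of the unique bad orbit as the single $2\times2$ nilpotent Jordan block, and your verification that in that case the product vectors in $\cV$ only span the $(mn-3)$-dimensional subspace $\lin\{\ket{i,j}:(i,j)\neq(0,0),(0,1),(1,1)\}$, are both correct. The only genuine gap is the one you flag yourself: you have not carried out the enumeration showing that every \emph{other} Kronecker type (multiple $1\times1$ blocks, a single Jordan block of size $\geq3$, or any pattern containing a singular $L_\epsilon$ or $L_\eta^T$ block) admits $mn-2$ independent products in $\cV$. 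This is routine but not automatic; for instance, the case of two $1\times1$ blocks at distinct eigenvalues (i.e.\ $\cV^\perp=\lin\{\ket{0,0},\ket{1,1}\}$) and the case of a size-$3$ Jordan block each require a small separate argument. Until that enumeration is written out, (ii) remains a proof sketch rather than a proof.
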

 
 \begin{lemma} \cite{2013Non}
 	\label{le:pre}
 	Suppose $\rho$ is an $m\times n$ NPT state. If $\rho$ is a pure state with Schmidt rank $r$, then
 	
 	\begin{eqnarray}
 	\ln (\rho^{\Gamma})=(\frac{r^2-r}{2},mn-r^2,\frac{r^2+r}{2}) .
 	\end{eqnarray}
 \end{lemma}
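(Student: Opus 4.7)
The plan is to reduce to a canonical form by an invertible local transformation and then read off the spectrum directly. Since $\ln(\cdot)$ is invariant under $\ast$-congruence by any nonsingular matrix (Sylvester's theorem), and since partial transpose intertwines nicely with SLOCC, it suffices to compute the inertia for one representative in each SLOCC orbit.

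First, I would use the Schmidt decomposition $\ket{\psi}=\sum_{i=1}^r \sqrt{c_i}\,\ket{a_i,b_i}$ with $c_i>0$, extend $\{\ket{a_i}\},\{\ket{b_i}\}$ to orthonormal bases of $\bbC^m,\bbC^n$, and define invertible local operators $Y_A,Y_B$ that send $\ket{a_i}\mapsto c_i^{-1/4}\ket{i}$ and $\ket{b_i}\mapsto c_i^{-1/4}\ket{i}$ for $1\le i\le r$ (and act as any invertible extension on the remaining basis vectors). Writing $\rho'=(Y_A\ox Y_B)\rho(Y_A\ox Y_B)^\dg$, a short bookkeeping calculation using $(|x\rangle\!\langle y|)^T=|y\rangle\!\langle x|$ shows
\begin{eqnarray}
(\rho')^\G=(Y_A^*\ox Y_B)\,\rho^\G\,(Y_A^*\ox Y_B)^\dg,
\end{eqnarray}
so $\rho^\G$ and $(\rho')^\G$ are congruent via an invertible $S=Y_A^*\ox Y_B$. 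By Sylvester's theorem they share the same inertia, and $\rho'=\ketbra{\phi}{\phi}$ where $\ket{\phi}=\sum_{i=1}^r\ket{i,i}\in\bbC^m\ox\bbC^n$.

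Second, I would compute the partial transpose explicitly:
\begin{eqnarray}
\ketbra{\phi}{\phi}^\G=\sum_{i,j=1}^r \ketbra{j,i}{i,j},
\end{eqnarray}
which is supported on the $r^2$-dimensional subspace $\lin\{\ket{i,j}:1\le i,j\le r\}$ and is zero on its orthogonal complement. On that supporting subspace it is exactly the swap operator on $\bbC^r\ox\bbC^r$, whose spectral decomposition is standard: eigenvalue $+1$ on the symmetric subspace (dimension $\tfrac{r(r+1)}{2}$) and eigenvalue $-1$ on the antisymmetric subspace (dimension $\tfrac{r(r-1)}{2}$). Adding the $mn-r^2$ zero eigenvalues from the orthogonal complement yields exactly $\ln(\rho^\G)=\bigl(\tfrac{r^2-r}{2},\,mn-r^2,\,\tfrac{r^2+r}{2}\bigr)$.

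The only step that requires any real care is the first one, namely checking that partial transpose descends to a congruence of $\rho^\G$ by an invertible matrix. Once that is in hand, everything else is a direct reading of the swap operator's spectrum, so I expect no genuine obstacle. A small sanity check: when $r=1$ the state is a product, so $\rho^\G\succeq 0$ with $mn-1$ zero eigenvalues and one eigenvalue $+1$, consistent with $(0,mn-1,1)$; and when $\rho$ is a pure NPT state the formula correctly reproduces the well-known fact that $\rho^\G$ has $\tfrac{r(r-1)}{2}$ negative eigenvalues.
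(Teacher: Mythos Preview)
Your argument is correct. The paper does not give its own proof of this lemma; it is quoted from reference \cite{2013Non} and used as a black box later (in the proof of Theorem \ref{th:N_33}). So there is no in-paper proof to compare against.

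For the record, your route is the standard one and every step checks out: the identity $(\rho')^\G=(Y_A^*\ox Y_B)\,\rho^\G\,(Y_A^*\ox Y_B)^\dg$ follows from $((A\ox B)\rho(C\ox D))^{\G}=(C^T\ox B)\rho^{\G}(A^T\ox D)$ applied with $A=Y_A$, $C=Y_A^\dg$, $B=Y_B$, $D=Y_B^\dg$, and the resulting congruence matrix $Y_A^*\ox Y_B$ is invertible, so Sylvester's theorem applies. The identification of $\proj{\phi}^\G$ with the swap on $\bbC^r\ox\bbC^r$ (padded by zeros) and the symmetric/antisymmetric eigenspace count are also correct. One tiny cosmetic point: the hypothesis ``NPT'' in the lemma forces $r\ge 2$, but as you note your computation is valid for all $r\ge1$, and the $r=1$ sanity check is consistent.
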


\begin{lemma}\cite[Proposition $6$]{Chen2013}
	\label{le:2}
	Suppose $\mathcal{V} \subseteq \bbC^m \otimes \bbC^n$. If $\dim(\mathcal{V}) >(m-1)(n-1)$ then $\mathcal{V}$ contains at least one product
	vector. Furthermore, if $\dim(\mathcal{V}) >(m-1)(n-1)+1$ then $\mathcal{V}$ has infinitely many product vectors. 
\end{lemma}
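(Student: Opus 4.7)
The plan is to translate the question into projective algebraic geometry and invoke the projective dimension theorem. In $\mathbb{P}(\bbC^m\ox\bbC^n)=\mathbb{P}^{mn-1}$ the set of classes of product vectors is the Segre variety $\Sigma_{m,n}:=\{[\ket{a}\ox\ket{b}]:\ket{a}\in\bbC^m,\ket{b}\in\bbC^n\}$, an irreducible projective variety of dimension $(m-1)+(n-1)=m+n-2$. Product vectors in $\cV$ (up to nonzero scalar) correspond bijectively to points of $\mathbb{P}(\cV)\cap\Sigma_{m,n}$, where $\mathbb{P}(\cV)\subseteq\mathbb{P}^{mn-1}$ is a linear subspace of projective dimension $\dim(\cV)-1$.

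I will then apply the projective dimension theorem: if $X\subseteq\mathbb{P}^N$ is an irreducible projective variety and $L\subseteq\mathbb{P}^N$ is a linear subspace, every non-empty component of $L\cap X$ has dimension at least $\dim L+\dim X-N$, and moreover $L\cap X$ is non-empty whenever $\dim L+\dim X\geq N$. This can be argued by writing $L$ as an intersection of $N-\dim L$ hyperplanes and noting that each successive proper hyperplane section of an irreducible projective variety of positive dimension is non-empty and drops its dimension by exactly one.

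Taking $N=mn-1$, $X=\Sigma_{m,n}$ and $L=\mathbb{P}(\cV)$, the first hypothesis $\dim(\cV)>(m-1)(n-1)$ gives $\dim(\cV)\geq(m-1)(n-1)+1=mn-m-n+2$, hence
\begin{equation}
\dim\mathbb{P}(\cV)+\dim\Sigma_{m,n}=(\dim(\cV)-1)+(m+n-2)\geq mn-1=N,
\end{equation}
so $\mathbb{P}(\cV)\cap\Sigma_{m,n}$ is non-empty and $\cV$ carries a product vector. Under $\dim(\cV)>(m-1)(n-1)+1$ the same estimate becomes at least $mn=N+1$, forcing every component of $\mathbb{P}(\cV)\cap\Sigma_{m,n}$ to have projective dimension at least one; since any positive-dimensional projective variety over $\bbC$ contains infinitely many points, $\cV$ must contain infinitely many pairwise non-proportional product vectors.

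The main difficulty is not the intersection count but the careful appeal to the projective dimension theorem on a possibly reducible or singular intersection; the cleanest route is the hyperplane-by-hyperplane slicing argument above, which relies only on the classical fact that a proper hyperplane section of an irreducible projective variety of positive dimension is non-empty and drops the dimension by exactly one. Once that lemma is in hand both parts of the statement follow immediately from the dimension inequalities, and no further structural information about $\Sigma_{m,n}$ is needed.
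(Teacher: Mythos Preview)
The paper does not supply its own proof of this lemma; it is simply quoted as \cite[Proposition~6]{Chen2013}. Your argument is correct and is in fact the standard one (and essentially what appears in the cited reference): identify product vectors with the Segre variety $\Sigma_{m,n}\subseteq\mathbb{P}^{mn-1}$ of dimension $m+n-2$, and apply the projective dimension theorem to the intersection $\mathbb{P}(\mathcal{V})\cap\Sigma_{m,n}$. Your dimension bookkeeping is right in both cases, and the hyperplane-by-hyperplane slicing justification you sketch for the projective dimension theorem is the clean way to make the argument self-contained.
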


\begin{lemma} \cite[Lemma $5$]{2020Inertias}
	\label{le:mn}
	Suppose $W$ is an EW on $\mathbb{C}^m \ox \mathbb{C}^n$.
	
	(i) Let $A$ be the non-positive eigen-space of $W$, i.e., the sum of negative and zero eigen-spaces
	of W. Then the product vectors in $A$ all belong to the zero eigen-space of W. In particular,
	every vector in the negative eigen-space of W is a pure entangled state.
	
	(ii) The number of negative eigenvalues of W is in $[1, (m-1)(n-1)]$. The decomposable EW
	containing exactly $(m-1)(n-1)$ negative eigenvalues exists.

	(iii) The number of positive eigenvalues of W is in $[2, mn-1]$.
\end{lemma}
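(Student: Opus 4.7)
The plan is to prove the three parts in sequence, exploiting the EW property together with Lemma \ref{le:mn-2} and Lemma \ref{le:2} already available in the excerpt.

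For part (i), I would take an arbitrary product vector $|\psi\rangle$ in the non-positive eigen-space $A$ and decompose it orthogonally as $|\psi\rangle=|\psi_{-}\rangle+|\psi_{0}\rangle$, with $|\psi_{-}\rangle$ in the negative eigen-space and $|\psi_{0}\rangle$ in the zero eigen-space. Using $W|\psi_{0}\rangle=0$, a direct expansion yields $\langle\psi|W|\psi\rangle=\langle\psi_{-}|W|\psi_{-}\rangle$. The EW condition in Definition \ref{de:EW} forces this quantity to be non-negative, while $W$ is strictly negative on its negative eigen-space, so the expression is non-positive and vanishes only when $|\psi_{-}\rangle=0$. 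Hence $|\psi\rangle=|\psi_{0}\rangle\in\ker W$, and in particular no vector of the negative eigen-space can be a product vector.

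For part (ii), the bound $v_{-}\ge 1$ is simply the non-positive semidefiniteness built into Definition \ref{de:EW}. The upper bound follows at once from (i) together with the contrapositive of Lemma \ref{le:2}: the negative eigen-space is a subspace of dimension $v_{-}$ containing no product vector, so $v_{-}\le(m-1)(n-1)$. To show sharpness I would exhibit a decomposable witness of the form $W=\rho^{\Gamma}$ where $\rho$ is an NPT state whose partial transpose realises exactly $(m-1)(n-1)$ negative eigenvalues. I expect this to be the main obstacle: Lemma \ref{le:pre} tells us that a single pure entangled state only supplies $\binom{\min(m,n)}{2}$ negatives, which already falls short of $(m-1)(n-1)$ once $m,n\ge 3$, so the construction must be a genuine mixture. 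A natural candidate is a suitable combination of maximally entangled states supported on complementary product bases, whose partial transpose can be diagonalised blockwise and its spectrum counted explicitly.

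For part (iii), the upper bound $v_{+}\le mn-1$ is immediate from $v_{-}\ge 1$. The lower bound $v_{+}\ge 2$ I would establish by contradiction. Assume $v_{+}=1$ with unique positive eigenvector $|\phi\rangle$, and set $V=|\phi\rangle^{\perp}$, an $(mn-1)$-dimensional subspace. Lemma \ref{le:mn-2}(i) guarantees that $V$ is spanned by product vectors. For any product $|\psi\rangle\in V$ the EW property gives $\langle\psi|W|\psi\rangle\ge 0$, while $W$ restricted to $V$ has no positive eigenvalues and is therefore non-positive, so $\langle\psi|W|\psi\rangle\le 0$. Combining these forces $\langle\psi|W|\psi\rangle=0$, and non-positivity of $W|_{V}$ then implies $W|\psi\rangle=0$. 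Linearity promotes this from a spanning family to all of $V$, so $V\subseteq\ker W$, whence $v_{0}\ge mn-1$; together with $v_{+}=1$ this gives $v_{-}=0$, contradicting part (ii).
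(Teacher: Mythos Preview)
The paper does not supply its own proof of this lemma; it is quoted verbatim as a preliminary result from \cite[Lemma~5]{2020Inertias}. So there is no in-paper argument to compare against, and your proposal should be assessed on its own merits.

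Your arguments for (i), for the two inequalities in (ii), and for (iii) are correct and essentially the standard ones. In (i) the orthogonal splitting $|\psi\rangle=|\psi_-\rangle+|\psi_0\rangle$ together with the EW inequality does force $|\psi_-\rangle=0$. In (ii) the upper bound is exactly the contrapositive of Lemma~\ref{le:2} applied to the negative eigen-space, which by (i) contains no product vector. In (iii) your use of Lemma~\ref{le:mn-2}(i) is clean: the codimension-one non-positive eigen-space is spanned by product vectors, each of which is forced into $\ker W$ by combining the EW inequality with non-positivity of $W$ on that invariant subspace, and this collapses $v_-$ to zero.

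The one genuine gap is the sharpness assertion in (ii). You correctly diagnose that a single pure state cannot work once $m,n\ge 3$, but the suggested fix --- ``a suitable combination of maximally entangled states supported on complementary product bases'' --- is not yet a construction, and it is not obvious that such a combination yields exactly $(m-1)(n-1)$ negatives rather than fewer. The existence claim is true (for instance, the $3\times 3$ case with four negatives is realised by the explicit state in \cite[Example~2]{2013Negative}, cited later in this paper for the inertia $(4,0,5)$), but a proof for general $m,n$ requires either an explicit family of NPT states $\rho$ together with a spectral count of $\rho^\Gamma$, or an appeal to a known construction of maximal completely entangled subspaces. As written, this step is a promissory note rather than an argument.
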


 We will apply Lemma \ref{le:mn-2} to prove Lemma \ref{le:c=2}. We will use Lemmas \ref{le:pre}, \ref{le:2} and \ref{le:mn} to prove Lemma \ref{th:N_33}. 
Then we denote the inertia set, $\mathcal{N}_{m,n}:=\{\ln (\rho^{\Gamma})|\rho\quad is \quad an\quad m\times n\quad NPT\quad state.\}$. We introduce some observations on the inertias.
\begin{lemma} \cite[Lemma $8$]{2020Inertias}
	\label{le:rel}
	(i) Suppose $\rho$ is an $m\times n$ NPT state and $\rho^{\Gamma}$ has the inertia $(a,b,c)$. Then there is a small enough $x>0$ and NPT state $\sigma:=\rho+xI_{mn}$, such that $\ln(\sigma^{\Gamma})=(a,0,b+c)$. 
	
	(ii) Suppose $m_1\leq m_2$ and $n_1\leq n_2$. If $(a_1,b_1,c_1)\in \mathcal{N}_{m_1,n_1}$, with $a_1+b_1+c_1=m_1n_1$, then for any  $l\in[0,m_2n_2-m_1n_1]$ we have $(a_1, m_2n_2-m_1n_1-l,b_1+c_1+l)\in \mathcal{N}_{m_2,n_2}$.
\end{lemma}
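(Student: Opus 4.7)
For part (i), I use the identity $I_{mn}^\Gamma=I_{mn}$, which yields
\[
\sigma^\Gamma = (\rho+xI_{mn})^\Gamma = \rho^\Gamma+xI_{mn},
\]
so the spectrum of $\sigma^\Gamma$ is that of $\rho^\Gamma$ shifted uniformly by $x$. Choosing $0<x$ strictly smaller than the smallest absolute value among the (nonempty, since $\rho$ is NPT) negative eigenvalues of $\rho^\Gamma$ keeps the $a$ negative eigenvalues negative, turns the $b$ zero eigenvalues into $x>0$, and leaves the $c$ positives positive. Hence $\ln(\sigma^\Gamma)=(a,0,b+c)$, and $\sigma$ is NPT because $a\ge 1$; normalization does not affect inertia.

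For part (ii), the strategy is to realize the target inertia in $\mathcal{N}_{m_2,n_2}$ through a small PPT perturbation of an embedded witness. Let $\rho$ be an $m_1\times n_1$ NPT state with $\ln(\rho^\Gamma)=(a_1,b_1,c_1)$, and fix inclusions $\bbC^{m_1}\hookrightarrow\bbC^{m_2}$, $\bbC^{n_1}\hookrightarrow\bbC^{n_2}$, so that $\rho$ embeds as an operator $\tilde\rho$ on $\bbC^{m_2}\ox\bbC^{n_2}$ whose partial transpose carries inertia $(a_1,\,b_1+(m_2n_2-m_1n_1),\,c_1)$: the negatives and positives are inherited from $\rho^\Gamma$, and $m_2n_2-m_1n_1$ fresh zeros come from the orthogonal complement of the embedding. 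Now construct a PPT operator $\tau$ with (a) marginals $\tau_A,\tau_B$ of full rank $m_2$ and $n_2$, and (b) $\dim(\ker\tilde\rho^\Gamma\cap\ker\tau^\Gamma)=m_2n_2-m_1n_1-l$. Property (a) makes $\sigma:=(1-\epsilon)\tilde\rho+\epsilon\tau$ a bona fide $m_2\times n_2$ state for every $\epsilon\in(0,1)$, while the standard perturbation identity $\nu_0(A+\epsilon B)=\dim(\ker A\cap\ker B)$ for Hermitian $A$ and positive semidefinite $B$ (valid for sufficiently small $\epsilon>0$), applied with $A=\tilde\rho^\Gamma$ and $B=\tau^\Gamma$ and combined with continuity of the negative and positive spectrum, pins down $\ln(\sigma^\Gamma)=(a_1,\,m_2n_2-m_1n_1-l,\,b_1+c_1+l)$. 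The state $\sigma$ is NPT because $a_1\ge 1$.

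The main obstacle is the explicit construction of $\tau$ satisfying both (a) and (b) for every admissible $l$. My plan is to build $\tau$ as a weighted sum of rank-one product projectors $\proj{e_k,f_k}$: first select product vectors whose partial conjugates $\ket{\bar e_k,f_k}$ lie in $\ker\tilde\rho^\Gamma$ and whose span has the correct codimension within that kernel---this is possible because $\ker\tilde\rho^\Gamma$ contains the entire $(m_2n_2-m_1n_1)$-dimensional complement subspace, which is itself spanned by the product vectors $\ket{i,j}$ with $i\ge m_1$ or $j\ge n_1$, and by Lemma~\ref{le:2} typically contains further product vectors inside the $m_1\times n_1$ block---and then augment by additional product projectors designed to promote both marginals to full rank without enlarging $\mathrm{range}\,\tau^\Gamma$. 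Boundary cases, notably $l=0$, require the most care and are handled by a case-by-case dimension-counting argument, exploiting the freedom to place the extra product vectors in directions whose partial conjugates already lie in the subspace fixed by the first step.
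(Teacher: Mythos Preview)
The paper does not prove this lemma---it is quoted verbatim from \cite{2020Inertias}---so there is no in-paper argument to compare against. Your treatment of (i) is correct and standard.

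For (ii) your overall strategy (embed, then perturb by a PPT operator) is sound, but it is more elaborate than necessary and the hard step is left as a promise. The direct route is: first apply (i) inside $\bbC^{m_1}\!\otimes\bbC^{n_1}$ to pass from $(a_1,b_1,c_1)$ to $(a_1,0,b_1+c_1)$; embed the resulting state $\rho'$ into $\bbC^{m_2}\!\otimes\bbC^{n_2}$, so that $\ln(\tilde\rho'^{\,\Gamma})=(a_1,\,m_2n_2-m_1n_1,\,b_1+c_1)$ with $\ker\tilde\rho'^{\,\Gamma}$ spanned by the complementary product basis vectors $\ket{i,j}$ ($i\ge m_1$ or $j\ge n_1$); then add $y\sum_{k=1}^{l}\proj{i_k,j_k}$ for any $l$ of those basis vectors. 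These summands are orthogonal to $\tilde\rho'^{\,\Gamma}$ and to one another, so the inertia is read off \emph{exactly} as $(a_1,\,m_2n_2-m_1n_1-l,\,b_1+c_1+l)$, with no perturbation theory required. Your more general PPT perturbation and the identity ``the number of zero eigenvalues of $A+\epsilon B$ equals $\dim(\ker A\cap\ker B)$'' are correct but superfluous here.

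The one place your argument is genuinely incomplete is the construction of $\tau$ with full-rank marginals for small $l$: you defer this to an unwritten ``case-by-case dimension-counting argument,'' yet for $l=0$ the requirement that $\ker\tau^\Gamma\cap\ker\tilde\rho^\Gamma$ have dimension $m_2n_2-m_1n_1$ pushes the support of $\tau^\Gamma$ (and hence of $\tau$) toward the original $m_1\times n_1$ block, which directly conflicts with enlarging the marginals. This tension is real and cannot be dismissed by a placeholder. In fact the paper itself does not enforce the full-rank-marginal reading when it invokes the lemma: Example~\ref{ex:13arrays}(vi) realizes $(1,5,3)\in\mathcal{N}_{3,3}$ via $\rho=\proj{\ket{00}+\ket{11}}$, whose reduced states have rank two. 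So the extra work you are attempting is both unfinished and, for the purposes of this paper, unnecessary.
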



Based on the above preliminary knowledge we are ready to study the inertia of the partial transposes of NPT states.

\section{Inertias of the partial transposes of NPT states}
\label{sec:I_1}
In this section we focus on the bipartite EWs constructed by the partial transpose of NPT state, and determine intertias of such EWs.

At first,   we  introduce what the two-qutrit subspaces contain for subspaces with various dimensions.  
\begin{lemma}
	\label{le:9}
	
	Suppose $\mathcal{V} \subseteq \mathbb{C}^3 \otimes \mathbb{C}^3 $ has infinitely many pairwise linearly independent product vectors.
	
	(i) If the dimension of $\mathcal{V}$ is two,  then $\mathcal{V}$  up to SLOCC equivalence and system permutation contains $\{\ket{0,0},\ket{0,1}\}$.
	
	(ii) If the dimension of $\mathcal{V}$ is three, then $\mathcal{V}$ up to SLOCC equivalence and system permutation contains $\{\ket{0,0},\ket{0,1}\}$ or  $\{\ket{0,0},\ket{1,1}, (\ket{0}+\ket{1})(\ket{0}+\ket{1})\}$.
	
	(iii) If the dimension of $\mathcal{V}$ is four, then $\mathcal{V}$ up to SLOCC equivalence and system permutation contains $\{\ket{0,0},\ket{0,1}\}$ or $\{\ket{0,0},\ket{1,1}, (\ket{0}+\ket{1})(\ket{0}+\ket{1})\}$ or $\{\ket{0,0}, \ket{1,1}, \ket{2}(\ket{0}+\ket{1}), (a\ket{0}+b\ket{1}+\ket{2})(\ket{0}+f\ket{1})\}$ satisfying the conditions $a,b\neq 0$  and $f\neq 0,1$.
	
	(iv)  If the dimension of $\mathcal{V}$ is five, then $\mathcal{V}$ up to SLOCC equivalence and system permutation contains  $\{\ket{0,0},\ket{0,1}\}$ or $\{\ket{0,0},\ket{1,1}, (\ket{0}+\ket{1})(\ket{0}+\ket{1})\}$ or $\{\ket{0,0}, \ket{1,1}, \ket{2}(\ket{0}+\ket{1}), (a\ket{0}+b\ket{1}+\ket{2})(\ket{0}+f\ket{1})\}$, $\{\ket{0,0}, \ket{1,1}, \ket{2,2}, \}$ 
\end{lemma}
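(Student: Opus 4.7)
The plan is to translate the hypothesis into algebraic geometry: the product vectors in $\bbC^3\ox\bbC^3$ form the Segre variety $\Sigma\cong\bbP^2\times\bbP^2\subset\bbP^8$, and $\cV$ contains infinitely many pairwise linearly independent product vectors if and only if $\bbP(\cV)\cap\Sigma$ is a positive-dimensional projective subvariety. I would proceed case by case on $\dim(\cV)$, using at each stage the dichotomy between ``$\cV$ contains a completely product $2$-plane'' and ``it does not''.

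For (i), since $\bbP(\cV)$ is a projective line meeting $\Sigma$ in infinitely many points, the line is contained in $\Sigma$, so every vector of $\cV$ is a product vector. The only linear subspaces of $\bbC^3\ox\bbC^3$ contained in $\Sigma$ have the form $\ket{a}\ox V_B$ or $V_A\ox\ket{b}$, since any two linearly independent products $\ket{a_1,b_1},\ket{a_2,b_2}$ with both $\ket{a_1}\not\parallel\ket{a_2}$ and $\ket{b_1}\not\parallel\ket{b_2}$ would span the entangled vector $\ket{a_1,b_1}+\ket{a_2,b_2}$. An invertible local operation and, if needed, a system swap then normalizes $\cV$ to contain $\{\ket{0,0},\ket{0,1}\}$.

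For (ii)--(iv) I would iterate the dichotomy. If $\cV$ contains a completely product $2$-plane, then (i) furnishes the first listed form. Otherwise the product vectors must span $\cV$ itself, so I pick a basis of $\cV$ consisting of product vectors $\ket{a_i,b_i}$ and argue that $\{\ket{a_i}\}$ and $\{\ket{b_i}\}$ each must be linearly independent; else two of the products would share an $A$-factor (or $B$-factor) and, together with a third product in their span, generate a completely product $2$-plane, contradicting the branch. An invertible local operation normalizes the three chosen products to $\ket{0,0},\ket{1,1},\ket{2,2}$, and the question becomes: which additional vectors can extend this span while preserving a positive-dimensional family of product vectors in $\cV$? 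The residual stabilizer (diagonal scalings on each side together with permutations of the three labels and the system swap) should reduce every candidate to one of the normal forms listed in (ii), (iii), and (iv).

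The main obstacle will be verifying completeness of the lists in (iii) and (iv), namely enumerating all SLOCC orbits of $k$-dimensional subspaces containing a positive-dimensional family of product vectors and showing no sporadic configuration is missed. I expect the constraints $a,b\neq 0$ and $f\neq 0,1$ in (iii) to arise precisely from the requirement that the candidate extension neither collapses into the span of fewer basis products nor produces a completely product $2$-plane that would send us back to the first branch. A parallel analysis for (iv) builds on the same framework, adding at each dimension exactly one new normal form beyond those already available, and the bookkeeping of these residual cases is where the bulk of the technical work will concentrate.
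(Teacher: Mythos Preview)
Your dichotomy in (ii)--(iv) has a genuine gap. You argue that if $\cV$ contains no completely product $2$-plane, then choosing a product basis $\ket{a_i,b_i}$ forces the $A$-factors $\{\ket{a_i}\}$ (and likewise the $B$-factors) to be linearly independent in $\bbC^3$, on the grounds that otherwise two products would share a factor and produce a product $2$-plane. But linear dependence of three vectors in $\bbC^3$ does \emph{not} imply two of them are parallel: the vectors $\ket{0},\ket{1},\ket{0}+\ket{1}$ are dependent with no two proportional. Your inference from ``dependent'' to ``two share a factor'' fails precisely here, and this is not a side case --- it is the case that actually occurs. The three-dimensional span of $\ket{0,0},\ket{1,1},\ket{2,2}$ contains only those three product directions (a generic combination $x\ket{0,0}+y\ket{1,1}+z\ket{2,2}$ has Schmidt rank $3$), so your normalized target has only finitely many product vectors and cannot satisfy the hypothesis. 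The correct normal form in (ii) is $\{\ket{0,0},\ket{1,1},(\ket{0}+\ket{1})(\ket{0}+\ket{1})\}$, whose factors live in a $2\times 2$ block; this is exactly the configuration your argument excludes.

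The paper avoids this by normalizing only \emph{two} product vectors to $\ket{0,0},\ket{1,1}$ and then imposing the condition that a generic combination $x\ket{0,0}+y\ket{1,1}+\ket{a_2,b_2}$ have rank one as a $3\times 3$ matrix; this rank constraint forces $\ket{a_2,b_2}$ into the $2\times 2$ block. In (iii) the paper proceeds the same way, normalizing two products, and then runs a case analysis (their (iii.A)--(iii.C)) on the third and fourth products; case (iii.A) explicitly shows that $\ket{0,0},\ket{1,1},\ket{2,2}$ together with any fourth product cannot yield a one-parameter family under the standing assumption that no three-subset already does. Since your subsequent steps in (iii) and (iv) build on the incorrect $\ket{0,0},\ket{1,1},\ket{2,2}$ normalization, they would need to be reworked from scratch. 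Your Segre-variety framing is fine and could be made to work, but the branch structure must allow for product bases whose $A$- and $B$-factors span only a $2$-dimensional subspace of $\bbC^3$.
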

\begin{proof}

	(i) If the dimension of $\mathcal{V}$  is two, then we obtain that one product vector is SLOCC equivalent to $\ket{0,0}$. One can verify that the other product vector can only be SLOCC equivalent or system permutated to $\ket{0,1}$.
	
	(ii) Suppose the dimension of $\mathcal{V}$ is three. Since $\mathcal{V}$ has infinitely many product vectors, it may contain the set $\{\ket{0,0},\ket{0,1}\}$ from (i). Suppose $\mathcal{V}$ does not contain the set. It contains product vectors $\{\ket{a_0,b_0},\ket{a_1,b_1},\ket{a_2,b_2}\}$.  Up to SLOCC equivalence, we may assume that $\ket{a_0,b_0}=\ket{0,0}$ and $\ket{a_1,b_1}=\ket{1,1}$. Suppose the third product vector is $(a\ket{0}+b\ket{1}+c\ket{2})(d\ket{0}+f\ket{1}+g\ket{2})$. Recall that $\mathcal{V}$ has infinitely many pairwise linearly independent product vectors. So we have that $x\ket{0,0}+y\ket{1,1}+(a\ket{0}+b\ket{1}+c\ket{2})(d\ket{0}+f\ket{1}+g\ket{2})$ is a product vector and $x,y\neq 0$. Then we obtain that the  rank of  $x\proj{0}+y\proj{1}+(a\ket{0}+b\ket{1}+c\ket{2})(d\bra{0}+f\bra{1}+g\bra{2})= \bma x+ad&af &ag\\ bd&y+bf &bg\\ cd &cf &cg\ema$ is one. Because these three row vectors are pairwise linearly dependent and $x,y\neq 0$, we have $c=f=0$. Then we obtain that $(a\ket{0}+b\ket{1})(d\ket{0}+f\ket{1})$ is SLOCC equivalent to $(\ket{0}+\ket{1})(\ket{0}+\ket{1})$. Then $\mathcal{V}$ up to SLOCC equivalence and system permutation contains $\{\ket{0,0},\ket{1,1}, (\ket{0}+\ket{1})(\ket{0}+\ket{1})\}$.
	
	(iii) Suppose the dimension of $\mathcal{V}$  is four.   Since $\mathcal{V}$ has infinitely many product vectors, it may contain the sets  $\{\ket{0,0},\ket{0,1}\}$ or  $\{\ket{0,0},\ket{1,1}, (\ket{0}+\ket{1})(\ket{0}+\ket{1})\}$ from (ii). Suppose $\mathcal{V}$ does not contain the sets. That is, we hypothesize that the space spanned by any three product vectors of the set does not generate infinitely many product vectors. The sets contains product vectors $\{\ket{a_0,b_0},\ket{a_1,b_1},\ket{a_2,b_2},\ket{a_3,b_3}\}$.  Up to SLOCC equivalence, we may assume that $\ket{a_0,b_0}=\ket{0,0}$ and $\ket{a_1,b_1}=\ket{1,1}$. Then we discuss three cases (iii.A)-(iii.C) in terms of the third product vector $\ket{a_2,b_2}=\ket{\alpha}$ and the fouth product vector $\ket{a_3,b_3}=\ket{\beta}$.
	
	Because of the hypothesis, $\ket{\alpha}$ can only be  $\ket{2}(a\ket{0}+b\ket{1}+c\ket{2})$. We will divide $\ket{\alpha}$ into two cases  up to SLOCC equivalence and permutation. One is $\ket{2,2}$, the other is $\ket{2}(m\ket{0}+n\ket{1})$.  When $\alpha$ is $\ket{2,2}$, we will prove it is impossible  that $\ket{\alpha}$ is $\ket{2,2}$ and $\ket{\beta}$ is $(a\ket{0}+b\ket{1}+c\ket{2})(d\ket{0}+f\ket{1}+g\ket{2})$ in (iii.A). Then we obtain that $\ket{\alpha}$ and $\ket{\beta}$ are not up to SLOCC equivalent to $\ket{2,2}$. Then we consider the other case for  $\ket{\alpha}=\ket{2}(m\ket{0}+n\ket{1})$. If $m=0$ or $n=0$, one can prove that $\{\ket{0,0},\ket{1,1},\ket{\alpha}\}$ generates infinitely many product vectors. We draw a contradiction. So $m,n\neq 0$. We obtain that $\ket{\alpha}$ is SLOCC equivalent to $\ket{2}(\ket{0}+\ket{1})$ in this case. Because $\ket{\beta}$ are not up to SLOCC equivalent to $\ket{2,2}$, we will divide $\ket{\beta}$ into two cases when $\ket{\alpha}=\ket{2}(\ket{0}+\ket{1})$.  Then we will dicuss the case that $\ket{\alpha}$ is $\ket{2}(\ket{0}+\ket{1})$ and $\ket{\beta}$ is $(a\ket{0}+b\ket{1})(d\ket{0}+f\ket{1}+\ket{2})$ in (iii.B), and the case that $\ket{\alpha}$ is $\ket{2}(\ket{0}+\ket{1})$ and $\ket{\beta}$ is $(a\ket{0}+b\ket{1}+\ket{2})(d\ket{0}+f\ket{1})$ in (iii.C). Suppose $d=0$ or $f=0$. $\mathcal{V}$ up to SLOCC equivalence and system permutation contains $\{\ket{0,0},\ket{1,1},(a\ket{0}+b\ket{1}+\ket{2})(d\ket{0}+f\ket{1})\}$. We get that  $\mathcal{V}$ up to SLOCC equivalence and system permutation contains $\{\ket{0,0},\ket{0,1}\}$. We draw a contradiction. Then we have $d,f\neq 0$. We will suppose $d=1$ and $f\neq 0$ in (iii.C).
	
	(iii.A) Suppose $\ket{\alpha}$ is $\ket{2,2}$ and $\ket{\beta}$ is $(a\ket{0}+b\ket{1}+c\ket{2})(d\ket{0}+f\ket{1}+g\ket{2})$.  $\mathcal{V}$ has infinitely many pairwise linearly independent product vectors. Then we obtain that the  rank of  $x\proj{0}+y\proj{1}+z\proj{2}+(a\ket{0}+b\ket{1}+c\ket{2})(d\bra{0}+f\bra{1}+g\bra{2})= \bma x+ad&af &ag\\ bd&y+bf &bg\\ cd &cf &z+cg\ema$ is one. These three row vectors are pairwise linearly dependent.  Because of the hypothesis, we get that $x,y,z\neq 0$. Then one can verify that the rank of the above matrix is more than one. Then we draw a contradiction. 
	
	(iii.B) Suppose $\ket{\alpha}$ is $\ket{2}(\ket{0}+\ket{1})$, $\ket{\beta}$ is $(a\ket{0}+b\ket{1})(d\ket{0}+f\ket{1}+\ket{2})$. Then we obtain that the  rank of  $x\proj{0}+y\proj{1}+z\ket{2}(\bra{0}+\bra{1})+(a\ket{0}+b\ket{1})(d\bra{0}+f\bra{1}+\bra{2})= \bma x+ad&af &a\\ bd&y+bf &b\\z &z &0\ema$ is one. One can verify that if the rank of the  matrix is one, then $a=b=x=y=0$. We draw a contradiction.
	
	(iii.C) Suppose $\ket{\alpha}$ is $\ket{2}(\ket{0}+\ket{1})$, $\ket{\beta}$ is $(a\ket{0}+b\ket{1}+\ket{2})(\ket{0}+f\ket{1})$ and $f\neq 0$. Then we obtain that the  rank of  $x\proj{0}+y\proj{1}+z\ket{2}(\bra{0}+\bra{1})+(a\ket{0}+b\ket{1}+\ket{2})(\bra{0}+f\bra{1})= \bma x+a&af &0\\ b&y+f &0\\ z+1 &z+f &0\ema$ is one. Because of the hypothesis, we get that $x,y,z\neq0$. If $f=1$, then by (ii) we have that $\{\ket{0,0}, \ket{2}(\ket{0}+\ket{1}), (a\ket{0}+b\ket{1}+\ket{2})(\ket{0}+\ket{1})\}$ can generate infinitely many product vectors when $b=0$.  We draw a contradiction. We obtain that $f\neq 1$. Then we consider the values of $a$ and $b$.  Suppose $a=0$. We have that $\{\ket{1}\bra{1},\ket{2}(\bra{0}+\bra{1}),(b\ket{1}+\ket{2})(\bra{0}+f\bra{1})\}$ is a $3$-dimensional subspace. We draw a contradiction. So we have that $a\neq0$. Similarly, we get that $b\neq 0$.  Suppose $a,b\neq 0$ and $f\neq0,1$. We define $n=\dfrac{x+a}{af}$. $\dfrac{x+a}{af}=\dfrac{b}{y+f}=\dfrac{z+1}{z+f}=n$, if and only if the rank of $\bma x+a&af &0\\ b&y+f &0\\ z+1 &z+f &0\ema$ is one. We obtain that $x=afn-a$, $y=\dfrac{b-fn}{n}$  and $z=\dfrac{1-fn}{1-n}$. Because the value of $n$ is arbitrary, there are various combinations of values of $x,y,z\neq 0$.  So we get that $\mathcal{V}$ up to SLOCC equivalence and system permutation contains  $\{\ket{0,0}, \ket{1,1}, \ket{2}(\ket{0}+\ket{1}), (a\ket{0}+b\ket{1}+\ket{2})(d\ket{0}+f\ket{1})\}$ and $a,b\neq 0$ and $f\neq 0,1$.
	
\end{proof}
Then we introduce a property of the inertia of an $n\times n$ bipartite NPT state.
\begin{lemma}
	\label{le:add}
	Suppose $\r$ is an $n\times n$ bipartite NPT state, and $\r^\G=P-Q$ where $P,Q\ge0$ and $P\perp Q$. If $\ket{a,b}\in \cR(\r^\G)$, then $\r^\G+\proj{a,b}$ has the inertia $(\rank Q, n-\rank P-\rank Q, \rank P )$ or $(-1+\rank Q, n-\rank P-\rank Q, 1+\rank P )$ or $(-1+\rank Q,  n+1-\rank P-\rank Q, \rank P )$.   	
\end{lemma}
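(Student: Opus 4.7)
The plan is to exploit the invariant subspace decomposition $\bbC^n \ox \bbC^n = \cR(\r^\G) \oplus \ker(\r^\G)$, observe that the rank-one perturbation $\proj{a,b}$ respects this decomposition, and then apply eigenvalue interlacing on the nontrivial block.

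First, since $P \perp Q$ and both are PSD, the range decomposes as $\cR(\r^\G) = \cR(P) \oplus \cR(Q)$ with dimension $\rank P + \rank Q$, so $\ker(\r^\G)$ has dimension $n^2 - \rank P - \rank Q$ (the middle entry in the three stated inertias should read $n^2 - \rank P - \rank Q$ rather than $n - \rank P - \rank Q$). Because $\ket{a,b} \in \cR(\r^\G)$, every $\ket{k} \in \ker(\r^\G) = \cR(\r^\G)^\perp$ satisfies $\braket{a,b}{k} = 0$, so $\proj{a,b}$ annihilates $\ker(\r^\G)$ and preserves $\cR(\r^\G)$. Hence $\r^\G + \proj{a,b}$ is block diagonal with respect to this orthogonal decomposition, and the kernel block contributes exactly $n^2 - \rank P - \rank Q$ zero eigenvalues to the perturbed matrix.

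Second, I would restrict attention to the $(\rank P + \rank Q)$-dimensional range block, where $P - Q$ is invertible with inertia $(\rank Q, 0, \rank P)$ and we add a rank-one PSD operator. Eigenvalue interlacing for a rank-one PSD perturbation asserts that if $\lambda_1 \leq \cdots \leq \lambda_{\rank P+\rank Q}$ are the original eigenvalues and $\mu_1 \leq \cdots \leq \mu_{\rank P+\rank Q}$ the perturbed ones, then $\lambda_i \leq \mu_i \leq \lambda_{i+1}$. Consequently the number of strictly negative eigenvalues can drop by at most one, no previously nonnegative eigenvalue can become negative, and any newly produced zero must originate from lifting a previously negative eigenvalue to exactly zero. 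This leaves three mutually exclusive cases for the restricted inertia, namely $(\rank Q, 0, \rank P)$, $(\rank Q - 1, 0, \rank P + 1)$, and $(\rank Q - 1, 1, \rank P)$. Reassembling each with the kernel block yields the three inertias in the statement.

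The main obstacle is just the careful bookkeeping of the interlacing argument, in particular verifying that one cannot lose more than one negative eigenvalue under a rank-one PSD perturbation and that the only alternative to absorbing a negative eigenvalue into the positive part is sending it to zero. Because the lemma merely catalogues the three possible outcomes rather than asserting each is realized for every choice of $\ket{a,b}$, no case analysis on the explicit structure of $\ket{a,b}$ is required, and the product-vector nature of $\ket{a,b}$ plays no role in the proof (only membership in $\cR(\r^\G)$ is used).
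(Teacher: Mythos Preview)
Your proof is correct and takes a genuinely different route from the paper. The paper argues by explicit case analysis on whether $\ket{a,b}$ lies in $\cR(P)$, in $\cR(Q)$, or has nonzero components $\ket{x}\in\cR(P)$ and $\ket{y}\in\cR(Q)$; in the last case it writes $P=P_1+\proj{x}$, $Q=Q_1+\proj{y}$ and then analyses the rank-at-most-two operator $\s=\proj{x}-\proj{y}+\proj{a,b}$ directly. Your approach instead blocks off $\ker(\r^\G)$ and invokes eigenvalue interlacing for a rank-one PSD perturbation on the invertible block, which handles all cases uniformly and gives an immediate reason why at most one negative eigenvalue can be lost. This is both cleaner and more robust: the paper's decomposition $P=P_1+\proj{x}$ with $P_1\ge0$ and $\rank P_1=\rank P-1$ is not automatic without first rescaling $\ket{x}$, and its assertion that $\ket{a,b}\in\cR(Q)$ leaves the inertia unchanged is not true in general (a large enough $\proj{a,b}$ can flip a negative eigenvalue of $-Q$ to positive). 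Your interlacing argument sidesteps these delicacies entirely. Your observation that the middle inertia entries should read $n^2-\rank P-\rank Q$ rather than $n-\rank P-\rank Q$ is also correct, and your remark that the product structure of $\ket{a,b}$ plays no role matches the paper's own proof, which likewise uses only $\ket{a,b}\in\cR(\r^\G)$.
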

\begin{proof}
	Because $\ket{a,b}\in \cR(\r^\G)=\cR(P)+\cR(Q)$, we have $\ket{a,b}\in \cR(P)$, $\ket{a,b}\in \cR(Q)$, or $\ket{a,b}\in \{\ket{x},\ket{y}\}$ where $\ket{x}\in \cR(P)$ and $\ket{y}\in \cR(Q)$. If $\ket{a,b}\in \cR(P)$ or $\cR(Q)$ then by the definition of inertia, we obtain that $\r^\G+\proj{a,b}$ has inertia $(\rank Q, n-\rank P-\rank Q, \rank P)$. Next suppose $\ket{a,b}\in \{\ket{x},\ket{y}\}$. We can write up $P=P_1+\proj{x}$ and $Q=Q_1+\proj{y}$ where
	\begin{eqnarray}
	\label{eq:p1q1}
	&& P_1,Q_1\ge0, 
	\notag\\&&
	\rank P=1+\rank P_1, 
	\notag\\&&
	\rank Q=1+\rank Q_1. 	
	\end{eqnarray}
	We have 
	\begin{eqnarray}
	&&\r^\G+\proj{a,b}
	\\=&&	
	P-Q+\proj{a,b}
	\\=&&
	(P_1+\proj{x})-(Q_1+\proj{y})+\proj{a,b}
	\\=&&
	P_1-Q_1+\s,
	\end{eqnarray} 
	where $\s=\proj{x}-\proj{y}+\proj{a,b}$ has rank one or two. Further, \eqref{eq:p1q1} implies that the nonzero vector in $\cR(\s)$ is linearly independent from any nonzero vector in $\cR(P_1)+\cR(Q_1)$. If $\rank\s=1$ then $\s$ is positive semidefinite. So $\r^\G+\proj{a,b}$ has the inertia $(-1+\rank Q, n+1-\rank P-\rank Q, \rank P )$. On the other hand if $\rank\s=2$ then $\s$ has at least one positive eigenvalue. So $\r^\G+\proj{a,b}$ has the inertia $(\rank Q, n-\rank P-\rank Q, \rank P )$ or $(-1+\rank Q, n-\rank P-\rank Q, 1+\rank P )$. We have proven the assertion. 
\end{proof}

 Then we investigate the number of positive eigenvalues of  bipartite EWs.

\begin{lemma}
	\label{le:c=2}
	Every bipartite EW has at least three positive eigenvalues.
\end{lemma}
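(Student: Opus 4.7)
The plan is to argue by contradiction. By Lemma \ref{le:mn}(iii) every bipartite EW $W$ on $\bbC^m\ox\bbC^n$ has $v_+\ge 2$, so assume $v_+=2$. Let $N$ denote the non-positive eigenspace of $W$, so $\dim N=mn-2$. Lemma \ref{le:mn-2}(ii) then presents two exhaustive cases: either (a) $N$ is spanned by product vectors, or (b) there is a product invertible $T=T_A\ox T_B$ with $TN=V_0:=\lin\{\ket{\Psi},\ket{i,j}:(i,j)\neq(0,0),(0,1),(1,1)\}$, where $\ket{\Psi}:=\ket{0,0}+\ket{1,1}$. In case (a), Lemma \ref{le:mn}(i) forces $N\sue\ker W$, hence $v_-=0$, contradicting the existence of a negative eigenvalue.

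In case (b), I first analyze the product vectors inside $V_0$: the constraints $a_0b_0=a_1b_1$ and $a_0b_1=0$ on a product $\ket{a,b}\in V_0$ combine to give $a_0b_0=a_1b_1=0$, so every such product lies in $P:=\lin\{\ket{i,j}:(i,j)\neq(0,0),(0,1),(1,1)\}$, which is $(mn-3)$-dimensional. Pulling back by $T^{-1}$ and applying Lemma \ref{le:mn}(i), every product vector of $N$ lies in $T^{-1}P\sue\ker W$, so $v_0\ge mn-3$. Combined with $v_+=2$ and $v_-\ge 1$ this pins down $v_-=1$, $v_0=mn-3$, and $\ker W=T^{-1}P$.

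Finally, I would construct a product vector violating the EW condition. Define $\tilde W:=T^{-\dg}WT^{-1}$, which is Hermitian with the same inertia as $W$. The identity $\bra{v}\tilde W\ket{v}=\bra{T^{-1}v}W\ket{T^{-1}v}=0$ for $v\in P$, together with polarization, yields $P\sue\ker\tilde W$, with equality by the inertia count. Since $T^{-1}\ket{\Psi}\in N\setminus\ker W$ carries a nonzero component along the negative eigenvector of $W$, one obtains $\bra{\Psi}\tilde W\ket{\Psi}=-K$ for some $K>0$. Now take $\ket{\phi_\e}:=(\e\ket{0}+\sqrt{1-\e^2}\ket{1})\ox(\sqrt{1-\e^2}\ket{0}+\e\ket{1})$: its $(\ket{0,0}-\ket{1,1})$-coefficient vanishes, its $\ket{\Psi}$-coefficient equals $\e\sqrt{1-\e^2}$, its $\ket{0,1}$-coefficient equals $\e^2$, and its $\ket{1,0}$-component lies in $\ker\tilde W$. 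A direct expansion in the orthonormal basis of $P^\perp$ then gives $\bra{\phi_\e}\tilde W\ket{\phi_\e}=-K\e^2+O(\e^3)<0$ for small $\e>0$; since $\ket{\psi_\e}:=T^{-1}\ket{\phi_\e}$ is a product vector and $\bra{\psi_\e}W\ket{\psi_\e}=\bra{\phi_\e}\tilde W\ket{\phi_\e}$, this contradicts $W$ being an EW. The main obstacle is case (b): SLOCC congruence does not preserve non-positive eigenspaces, so one must track how the product vectors of $N$ transfer to $\ker\tilde W$ before the concrete computation above becomes available.
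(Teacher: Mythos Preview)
Your proof is correct and, in case (b), takes a genuinely different route from the paper. One small expositional slip: polarization alone does not yield $P\sue\ker\tilde W$; polarizing $\bra{v}\tilde W\ket{v}=0$ over $v\in P$ only gives $\bra{u}\tilde W\ket{v}=0$ for $u,v\in P$, i.e.\ the compression $\Pi_P\tilde W\Pi_P$ vanishes. However, you have already established $T^{-1}P=\ker W$, so for $v\in P$ one gets $\tilde W v=T^{-\dg}W(T^{-1}v)=0$ directly, and the conclusion $P=\ker\tilde W$ stands without polarization.

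For case (a) both arguments are short; yours (if $N$ is spanned by product vectors then Lemma~\ref{le:mn}(i) forces $N\sue\ker W$, whence $v_-=0$) is marginally cleaner than the paper's, which instead singles out a product vector in $N$ with nonzero overlap on a negative eigenvector and reads off a negative expectation.

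The real divergence is case (b). The paper does not pin down the inertia or construct an explicit violating product vector. Instead it notes that (after SLOCC) the positive eigenspace sits inside the $2\times2$ corner, sandwiches $W$ by a product projector $\Pi$ supported away from that corner in at least one factor, and observes that $\Pi W\Pi$ is negative semidefinite while $\tr(\Pi W\Pi)=\sum_{i,j}\bra{i,j}\Pi W\Pi\ket{i,j}\ge 0$ because each summand is a product-vector expectation of an EW. This trace contradiction is quick but requires a small case split on the local dimensions. Your approach, by contrast, first forces the inertia to $(1,mn-3,2)$, identifies $\ker\tilde W=P$ exactly, and then produces the explicit family $\ket{\phi_\e}$ whose component modulo $\ker\tilde W$ is $\e\sqrt{1-\e^2}\,\ket{\Psi}+\e^2\ket{0,1}$, giving $\bra{\phi_\e}\tilde W\ket{\phi_\e}=-K\e^2+O(\e^3)<0$. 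This is more constructive and uniform in the dimensions, at the cost of the bookkeeping needed to transport the kernel through the SLOCC map---which you carry out correctly.
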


\begin{proof}
	It is known that every bipartite EW has at least two positive eigenvalues. Thus, we have to show that there is no EW with exact two positive eigenvalues.
	Assume that $W$ is an $m\times n$ EW which has inertia $(a,b,2)$. Then the spectral decomposition of $\rho^\G$ reads as
	\begin{eqnarray}
	\label{eq:sdec-1}
	W=-\sum_{j=1}^a\proj{v_j}+0\cdot\sum_{j=a+1}^{a+b}\proj{v_j}+\proj{v_{mn-1}}+\proj{v_{mn}},
	\end{eqnarray}
	where $\{\ket{v_j}\}_{j=1}^{mn}$ are pairwisely orthogonal.
	It follows from Lemma \ref{le:mn-2} that the non-positive eigenspace of $W$ is either spanned by product vectors or up to SLOCC equivalence spanned by $\{\ket{0,0}+\ket{1,1},\ket{i,j}:(i,j)\neq(0,0),(0,1),(1,1)\}$. 
	
	First, if $\lin\{\ket{v_j}\}_{j=1}^{a+b}$ is spanned by product vectors, there exists a product vector $\ket{x,y}$ which is orthogonal to $\lin\{\ket{v_{mn-1}},\ket{v_{mn}}\}$ but non-orthogonal to $\ket{v_1}$. Hence, we obtain that 
	\begin{eqnarray}
	\label{eq:c=2-1}
	0\leq \bra{x,y}W\ket{x,y}\leq -\bra{x,y}(\proj{v_1})\ket{x,y}<0.
	\end{eqnarray}
	It is a contradiction. 
	
	Second, if $\lin\{\ket{v_j}\}_{j=1}^{a+b}$ is spanned by $\{\ket{0,0}+\ket{1,1},\ket{i,j}:(i,j)\neq(0,0),(0,1),(1,1)\}$ up to SLOCC equivalence. It follows that $\lin\{\ket{v_{mn-1}},\ket{v_{mn}}\}$ is spanned by $\ket{0,0}-\ket{1,1},\ket{0,1}$. It is known that every two-qubit EW has inertia $(1,0,3)$. Hence, we assume one of $m,n$ is greater than two. Suppose one of the local dimensions is two. For example, we may assume $m=2$. Using the projector $(I_2\oplus 0_{n-2})\ox (I_n-(I_2\oplus 0_{n-2}))$ we obtain that
	\begin{eqnarray}
	\label{eq:c=2-2}
	\begin{split}	
	\tilde{W}&:=\big((I_2\oplus 0_{n-2})\ox (I_n-(I_2\oplus 0_{n-2}))\big) W \big((I_2\oplus 0_{n-2})\ox (I_n-(I_2\oplus 0_{n-2}))\big)\\
	&=-\big(I_2\ox (I_n-(I_2\oplus 0_{n-2}))\big)(\sum_{j=1}^a\proj{v_j})\big((I_2\oplus 0_{n-2})\ox (I_n-(I_2\oplus 0_{n-2}))\big).
	\end{split}
	\end{eqnarray}
	It follows that $\tilde{W}$ is negative semidefinite. It implies $\tr(\tilde{W})<0$. However, we have
	\begin{eqnarray}
	\label{eq:c=2-3}
	\begin{split}
	\tr(\tilde{W})&=\tr\Big(\big((I_2\oplus 0_{n-2})\ox (I_n-(I_2\oplus 0_{n-2}))\big) W \big((I_2\oplus 0_{n-2})\ox (I_n-(I_2\oplus 0_{n-2}))\big)\Big)\\
	&=\sum_{i,j}\bra{i,j}\big((I_2\oplus 0_{n-2})\ox (I_n-(I_2\oplus 0_{n-2}))\big) W \big((I_2\oplus 0_{n-2})\ox \\&(I_n-(I_2\oplus 0_{n-2}))\big)\ket{i,j}\\
	&\geq 0.
	\end{split}
	\end{eqnarray}
	The last inequality follows from $W$ is an EW. Thus, we derive a contradiction. Finally, suppose $m,n>2$. Using the projector $(I_m-(I_2\oplus 0_{m-2}))\ox (I_n-(I_2\oplus 0_{n-2}))$ we similarly obtain that $\tilde{W}=(I_m-(I_2\oplus 0_{m-2}))\ox (I_n-(I_2\oplus 0_{n-2}))W(I_m-(I_2\oplus 0_{m-2}))\ox (I_n-(I_2\oplus 0_{m-2}))$ is negative semidefinite. It follows that $\tr(\tilde{W})<0$. However, for the same reason \eqref{eq:c=2-3} we also have $\tr(\tilde{W})\geq 0$. We derive a contradiction again. Therefore, such an EW which has exact two positive eigenvalues does not exist.
	
	This completes the proof.
\end{proof}

We give some observations on the relation between $v_-$ and $v_+$ on the inertia.

\begin{lemma}
\label{le:negative}
Suppose $W$ is an EW on $\bbC^m \ox \bbC^n$. Suppose the negative space of $W$ is spanned by $\{\ket{u_j}\}$ and the positive space is spanned by $\{\ket{v_j}\}$. If $\{\ket{u_j}\}$ is supported in $\bbC^{m_1}\ox \bbC^{n_1}$ and $\{\ket{v_j}\}$ is supported in $\bbC^{m_2}\ox \bbC^{n_2}$ and $m_1>m_2$ and $n_1>n_2$,  then we can choose a projector $P=((I_{m_1}-(I_{m_2}\oplus 0_{m_1-m_2}))\oplus 0_{m-m_1})\ox((I_{n_1}-(I_{n_2}\oplus 0_{n_1-n_2}))\oplus 0_{n-n_1})$ such that $V=PWP$ is negative semidefinite.
\end{lemma}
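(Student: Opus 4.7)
The strategy is to exploit the explicit product form of $P$: it is a tensor product of two local orthogonal projectors whose ranges are engineered to be disjoint from the supports of the positive eigenvectors of $W$. Once this is checked, conjugation by $P$ kills the positive part of the spectral decomposition and leaves only a negative contribution.

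First I would identify the range of $P$. The factor $I_{m_1} - (I_{m_2} \oplus 0_{m_1-m_2})$ is the orthogonal projector inside $\bbC^{m_1}$ onto $\lin\{\ket{i}:m_2\leq i\leq m_1-1\}$, and direct-summing with $0_{m-m_1}$ embeds this projector in $\bbC^m$ without changing its range. The same holds on the $B$-side with $(m_2,m_1,m)$ replaced by $(n_2,n_1,n)$. Hence $P$ is the orthogonal projector onto
\[
\cW := \lin\bigl\{\ket{i,j}:m_2\leq i\leq m_1-1,\ n_2\leq j\leq n_1-1\bigr\}\subseteq\bbC^m\ox\bbC^n.
\]
In particular $\cW$ is orthogonal to $\bbC^{m_2}\ox\bbC^{n_2}$, since on the $A$-side the index sets $\{m_2,\dots,m_1-1\}$ and $\{0,\dots,m_2-1\}$ are disjoint (and likewise on the $B$-side).

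Next I would invoke the spectral decomposition $W=-\sum_j\lambda_j\proj{u_j}+\sum_k\mu_k\proj{v_k}$ with $\lambda_j,\mu_k>0$, where $\{\ket{u_j}\}$ and $\{\ket{v_k}\}$ are now orthonormal eigenbases of the negative and positive eigenspaces of $W$. These eigenspaces coincide with the spans in the hypothesis, so $\ket{u_j}\in\bbC^{m_1}\ox\bbC^{n_1}$ and, crucially, $\ket{v_k}\in\bbC^{m_2}\ox\bbC^{n_2}$ for all $j,k$. By the orthogonality observed above, $P\ket{v_k}=0$ for every $k$. Since $P=P^\dg$, this gives
\[
PWP \;=\; -\sum_j\lambda_j\,(P\ket{u_j})(P\ket{u_j})^\dg + \sum_k\mu_k\,(P\ket{v_k})(P\ket{v_k})^\dg \;=\; -\sum_j\lambda_j\,(P\ket{u_j})(P\ket{u_j})^\dg,
\]
which is negative semidefinite as a nonnegative combination of rank-one positive semidefinite operators taken with a global minus sign. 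This is exactly the claim.

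I do not anticipate any serious obstacle: the whole argument is bookkeeping about which basis vectors survive under $P$. The only subtle point is interpreting the support conditions correctly, namely that the positive eigenspace lives in the first $m_2$ computational basis vectors on $A$ and the first $n_2$ on $B$, because this is precisely what makes $\cW\perp\lin\{\ket{v_k}\}$. The strict inequalities $m_1>m_2$ and $n_1>n_2$ are needed only to ensure $P\neq 0$; if either were an equality the conclusion would still hold with $V=0$.
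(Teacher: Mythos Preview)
The paper states this lemma as an observation and does not supply a proof. Your argument is correct and is precisely the mechanism the paper employs implicitly in the proof of the preceding Lemma~\ref{le:c=2}: there, for the concrete case where the positive eigenspace is spanned by $\ket{0,0}-\ket{1,1}$ and $\ket{0,1}$, the authors conjugate $W$ by a product projector such as $(I_2\oplus 0_{n-2})\ox (I_n-(I_2\oplus 0_{n-2}))$, observe that this annihilates the positive part of the spectral decomposition, and conclude that the result is negative semidefinite. Your proof is the natural abstraction of that computation, so there is nothing to contrast.
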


\section{Inertias of the partial transpose of the two-qutrit bipartite NPT states}
\label{sec:I_2}
	In this section we focus on the inertias of  the two-qutrit bipartite NPT states.  In Subsec. \ref{subsec:4_1},  we propose two observations on of the two-qutrit  NPT states in Lemmas \ref{le:13} and \ref{le:12} firstly.  For understanding the partial transpose of the two-qutrit bipartite NPT states in Example \ref{ex:11}. Then we discuss all cases in  the two-qutrit bipartite NPT state in detail in Theorem \ref{th:NPT2}. In Subsec. \ref{subsec:4_2}, we draw a conclusion about $\mathcal{N}_{3,3}$  in Theorem \ref{th:N_33}. Finally we propose Example \ref{ex:13arrays} to illustrate Theorem \ref{th:N_33}(i).
\subsection{Supporting lemmas}
\label{subsec:4_1}	
At first, we propose some observations on the two-qutrit  NPT states.

\begin{lemma}
	\label{le:12}
	Let $\rho$ be a two-qutrit NPT state. Then we have,
	
	(i) if $|0,0\rangle$ and $|0,1\rangle$ are in $\ker \rho^{\Gamma}$, then $\ln (\rho^{\Gamma})$ is one of the $13$ arrays in Theorem \ref{th:N_33}(i).	
	
	(ii) $(4,2,3)\notin \mathcal{N}_{3,3}$.
	
\end{lemma}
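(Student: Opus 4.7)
The hypothesis $\ket{0,0},\ket{0,1}\in\ker\rho^\Gamma$ directly gives $v_0\geq 2$, because these are linearly independent kernel vectors. Combining this with $v_-\geq 1$ (from NPT), $v_-\leq (3-1)(3-1)=4$ (Lemma~\ref{le:mn}(ii)), $v_+\geq 3$ (Lemma~\ref{le:c=2}), $v_+\leq 8$ (Lemma~\ref{le:mn}(iii)), and the trace constraint $v_-+v_0+v_+=9$, a short case analysis over $v_-\in\{1,2,3,4\}$ leaves exactly ten candidate triples: $(1,2,6),(1,3,5),(1,4,4),(1,5,3),(2,2,5),(2,3,4),(2,4,3),(3,2,4),(3,3,3),(4,2,3)$. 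I would then verify by inspection that each of these ten triples already appears in the 13-array list of Theorem~\ref{th:N_33}(i). In addition, to reinforce the proof and to support part~(ii), I would record the structural consequences of the hypothesis for $\rho$ itself: computing $\bra{m,n}\rho^\Gamma\ket{0,b}=\rho_{0n,mb}$ gives $\rho_{0n,m0}=\rho_{0n,m1}=0$ for all $m,n$, hence $\rho\ket{0,0}=\rho\ket{0,1}=0$, and writing $\rho$ as a $3\times 3$ block matrix $[M_{ij}]$ on the A-system forces each $M_{0i}$ into the rank-one shape $\gamma_i\proj{2}$.

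\textbf{Plan for (ii).} I would argue by contradiction. Suppose a two-qutrit NPT state $\rho$ has $\ln(\rho^\Gamma)=(4,2,3)$. Its non-positive eigenspace has dimension $v_-+v_0=6>(3-1)(3-1)+1=5$, so by Lemma~\ref{le:2} it contains infinitely many pairwise linearly independent product vectors. By Lemma~\ref{le:mn}(i), none of these product vectors can lie in the negative eigenspace, so they all lie in the $2$-dimensional $\ker\rho^\Gamma$. Lemma~\ref{le:9}(i) now identifies $\ker\rho^\Gamma$ up to SLOCC equivalence and system permutation with $\lin\{\ket{0,0},\ket{0,1}\}$. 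Since SLOCC preserves the inertia by Sylvester's theorem, after this change of variables the hypothesis of part~(i) holds, so $\ln(\rho^\Gamma)$ must be among the 13 arrays of Theorem~\ref{th:N_33}(i). Because $(4,2,3)$ is absent from that list, we reach a contradiction.

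\textbf{Main obstacle.} The clean closure of (ii) above depends on knowing that $(4,2,3)$ is not in the 13-array list of Theorem~\ref{th:N_33}(i), which is forward-referenced in the paper. To avoid any circularity, I would replace the last step by a direct structural contradiction: after the SLOCC reduction, $\rho^\Gamma$ inherits from (i) the block form with its first two rows and columns zero and its $(0,2)$-indexed row/column supported only on $\{(0,2),(1,2),(2,2)\}$, so the whole nontrivial content is a $7\times 7$ Hermitian submatrix with this sparsity. Showing that, subject to $\rho\geq 0$ and $\tr\rho=1$, such a matrix cannot realize signature $(4,0,3)$ is the technically hardest step, and I expect to settle it by a rank/Schur-complement computation on the $7\times 7$ submatrix that bounds $v_-$ strictly below four.
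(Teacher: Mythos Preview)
Your plan for (ii) is correct and essentially identical to the paper's argument. The gap is in your plan for (i).

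The counting argument you propose produces ten candidate inertias, but four of them---$(2,4,3)$, $(3,2,4)$, $(3,3,3)$, $(4,2,3)$---are \emph{not} among the 13 arrays listed in Theorem~\ref{th:N_33}(i). So the step ``verify by inspection that each of these ten triples already appears in the 13-array list'' simply fails. Worse, $(4,2,3)$ is exactly what part~(ii) of this very lemma is meant to exclude, and $(2,4,3),(3,3,3)$ are excluded in Lemma~\ref{le:13}(iii) \emph{using} part~(i) as input; $(3,2,4)$ remains undecided in the paper. Your bounds $v_0\ge2$, $1\le v_-\le4$, $v_+\ge3$ are all correct, but they are not strong enough to land inside the list.

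The paper's proof of (i) is different and is precisely the structural observation you mention only as a ``reinforcement,'' pushed to its conclusion. From $\ket{0,0},\ket{0,1}\in\ker\rho^\Gamma$ one deduces $\ket{0,0},\ket{0,1}\in\ker\rho$; positivity of $\rho$ then forces the $\ket{0}\bra{0}_A$-block to be $\mathrm{diag}(0,0,a)$ and the off-diagonal $A$-blocks $P_{01},P_{02}$ to be supported only at the $(2,2)$ entry. A local invertible product operator $V=A\otimes B$ then brings $\rho$ to the block-diagonal form $\mathrm{diag}(0,0,a)\oplus\sigma$ with $\sigma$ a genuine $2\times3$ state. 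This splitting is along system $A$ and hence survives the partial transpose, so $\ln(\rho^\Gamma)$ is determined by $\ln(\sigma^\Gamma)$ plus two extra zeros and (if $a>0$) one extra positive eigenvalue. Feeding in the known $\mathcal{N}_{2,3}=\{(1,2,3),(1,1,4),(1,0,5),(2,0,4)\}$ yields only the six triples $(1,5,3),(1,4,4),(1,3,5),(1,2,6),(2,3,4),(2,2,5)$, all on the 13-array list. This is what makes (i) non-circular and strong enough to feed (ii) and Lemma~\ref{le:13}.

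Your worry about forward reference is a red herring: Theorem~\ref{th:N_33}(i) is just a named list of triples, and its proof (which only exhibits states realizing those inertias) does not call this lemma. The actual problem is that your counting argument does not land inside that list.
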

\begin{proof}
	(i) If $|0,0\rangle \in \ker \rho^{\Gamma}$, then we have $\rho^{\Gamma} |0,0\rangle =0$. Then we have that the first row and column of $\rho^{\Gamma}$ consist of zero entries. Similarly, if $|0,1\rangle \in \ker \rho^{\Gamma}$, then we have $\rho^{\Gamma} |0,1\rangle =0$. Then we have that the second row and column of $\rho^{\Gamma}$ consist of zero entries. 
	
	So $\rho^{\Gamma}$ can be written as 	
	$	\bma 
	0 & 0 & 0 & 0 & 0 & 0 & 0 & 0 & 0 \\
	0 & 0 & 0 & 0 & 0 & 0 & 0 & 0 & 0 \\
	0 & 0 & * & * & * & * & * & * & * \\
	0 & 0 & * & * & * & * & * & * & * \\
	0 & 0 & * & * & * & * & * & * & * \\
	0 & 0 & * & * & * & * & * & * & * \\
	0 & 0 & * & * & * & * & * & * & * \\
	0 & 0 & * & * & * & * & * & * & * \\
	0 & 0 & * & * & * & * & * & * & * 
	\ema.$
	Because $ |0,0\rangle ,|0,1\rangle \in \ker(\rho^{\Gamma})$, we have $ |0,0\rangle, |0,1\rangle \in \ker(\rho)$.
	Then $\rho$ can be written as	
	$	\bma 
	0 & 0 & 0 & 0 & 0 & 0 & 0 & 0 & 0 \\
	0 & 0 & 0 & 0 & 0 & 0 & 0 & 0 & 0 \\
	0 & 0 & a & 0 & 0 & b & 0 & 0 & c \\
	0 & 0 & 0 & * & * & * & * & * & * \\
	0 & 0 & 0 & * & * & * & * & * & * \\
	0 & 0 & b^* & * & * & d & * & * & * \\
	0 & 0 & 0 & * & * & * & * & * & * \\
	0 & 0 & 0 & * & * & * & * & * & * \\
	0 & 0 & c^* & * & * & * & * & * & * 
	\ema$.	
	We have $\bma a & b \\ b^* & d \ema \geq 0 $ by $\rho \geq 0$. If $a=0$, then $b=c=0$.
	
	If $a\neq 0$, then we have   
	$\rho=	\bma 
	0 & 0 & 0 & 0 & 0 & 0 & 0 & 0 & 0 \\
	0 & 0 & 0 & 0 & 0 & 0 & 0 & 0 & 0 \\
	0 & 0 & a & 0 & 0 & b & 0 & 0 & c \\
	0 & 0 & 0 & * & * & * & * & * & * \\
	0 & 0 & 0 & * & * & * & * & * & * \\
	0 & 0 & b^* & * & * & d & * & * & * \\
	0 & 0 & 0 & * & * & * & * & * & * \\
	0 & 0 & 0 & * & * & * & * & * & * \\
	0 & 0 & c^* & * & * & * & * & * & * 
	\ema \geq 0$. There exists a local invertible product operator $V=A\ox B$ such that $V\rho V^+ = \bma 0&0&0\\ 0&0&0\\ 0&0&a \ema \oplus \sigma$, where $\sigma$ is a $2\times 3$ state.
	
	We have known that $ \mathcal{N}_{2,3}=\{(1,2,3), (1,1,4),(1,0,5), (2,0,4)\}$ from \cite[Corollary $11(a)$]{2020Inertias}. Then one can verify that $\ln (\rho^{\Gamma})$ is one of the 13 arrays in Theorem \ref{th:N_33}(i).
	
	(ii) Suppose $\ln(\rho^{\Gamma})=(4,2,3)$. Then we have $\dim(\ker(\rho^{\Gamma}))=2$. Because the dimension of  non-positive eigen-space of $\rho$ is $6>(3-1)\times(3-1)+1$, we obtain that the non-positive eigen-space of $\rho$ has infinitely many product vectors from Lemma \ref{le:2}. From Lemma \ref{le:mn} we obtain that $\ker(\rho^{\Gamma})$ contains  infinitely many product vectors. We have $\ker(\rho^{\Gamma})=\mathrm{span}  \{\ket{a,b}, \ket{c,d}\}$. If  $\ket{a}$ and $\ket{c}$ are linearly dependent, then we can find an invertible operator $W=C\ox D$ such that $W\ket{a,b}=\ket{0,0}$ and $W\ket{c,d}=\ket{0,1}$. Then we have $\ker((W^{-1})^{\dagger}\rho W^{-1})=\mathrm{span} \{\ket{0,0},\ket{0,1}\}$. Let $\sigma =(W^{-1})^{\dagger}\rho W^{-1}$. We obtain that $\ln(\sigma^{\Gamma})=\ln(\rho^{\Gamma})$ by Sylvester theorem.  Then $\ket{0,0}$ and $\ket{0,1}$ are in $\ker(\sigma^{\Gamma})$. From (i) we derive a contradiction.
	So we have $(4,2,3)\notin \mathcal{N}_{3,3}$.
	
\end{proof}

\begin{lemma}
	\label{le:13}
	Let $\rho$ be a two-qutrit NPT state. Then we have,
	
	(i) if $\ket{0,0}$, $\ket{1,1}$ and $(\ket{0}+\ket{1})(\ket{0}+\ket{1})$ are in $\ker \rho^{\Gamma}$, then $\ln (\rho^{\Gamma})$ is one of the $13$ arrays in Theorem \ref{th:N_33}(i).
	
	(ii) if $\ket{0,0}, \ket{1,1}, \ket{2}(\ket{0}+\ket{1}), (a\ket{0}+b\ket{1}+\ket{2})(\ket{0}+f\ket{1})$ satisfying the conditions $a,b\neq 0$  and $f\neq 0,1$ are in $\ker \rho^{\Gamma}$, then $\ln (\rho^{\Gamma})$ is one of the $12$ arrays in theorem \ref{th:N_33}(i).
	
	(iii) $(3,3,3), (2,4,3)\notin  \mathcal{N}_{3,3}$.
\end{lemma}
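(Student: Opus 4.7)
The plan is to handle parts (i) and (ii) by direct structural analysis of $\rho$ from the given kernel conditions on $\rho^\G$, and to derive part (iii) as a corollary through the subspace classification in Lemma \ref{le:9}. For part (i), the first step is to translate each hypothesis $\ket{0,0},\ket{1,1},\ket{+,+}\in\ker\rho^\G$ (with $\ket{+}:=\ket{0}+\ket{1}$) into a kernel condition on $\rho$. Since all three product vectors are real, the identity $\bra{u,v}\rho^\G\ket{u,v}=\bra{u^*,v}\rho\ket{u^*,v}$ together with $\rho\ge 0$ forces each of them into $\ker\rho$; combined with the PSD fact that a zero on the diagonal kills its row and column, one then obtains $\rho\ket{i,j}=0$ for all $i,j\in\{0,1\}$ (using the derived relation $\rho^\G(\ket{0,1}+\ket{1,0})=0$ to handle the indices $01$ and $10$). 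Hence $\rho$ is supported on the $5$-dimensional L-shaped subspace $S=\lin\{\ket{02},\ket{12},\ket{20},\ket{21},\ket{22}\}$, with the internal constraints $\rho_{02,20}=\rho_{12,21}=0$ and $\rho_{02,21}=-\rho_{12,20}$ inherited from $(\rho^\G)_{ij,kl}=\rho_{kj,il}$. Writing $\rho^\G$ in blocks indexed by $\{[02],[12]\}\sqcup\{[20],[21]\}\sqcup\{[22]\}\sqcup\{[00],[01],[10],[11]\}$, one sees that the last block (call it Group D) couples to the rest only through $\ket{22}$, and only via the single scalar $w:=\rho_{02,21}$. The remaining inertia analysis proceeds by case analysis on whether $w=0$ or $w\ne 0$, on the ranks of the two diagonal blocks on $\{[02],[12]\}$ and $\{[20],[21]\}$, and on the scalar $z:=\rho_{22,22}$; at each step Lemma \ref{le:pre} handles low-rank pure contributions and Lemma \ref{le:rel}(ii) handles rank-increment compatibility. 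The outcome is that every achievable inertia is one of the $13$ arrays of Theorem \ref{th:N_33}(i).

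Part (ii) follows the same strategy. The first three vectors in the hypothesis are real, so they lie in $\ker\rho$ by the argument above; for the fourth vector the same identity gives $(a^*\ket{0}+b^*\ket{1}+\ket{2})(\ket{0}+f\ket{1})\in\ker\rho$. In particular, the condition $\ket{2}(\ket{0}+\ket{1})\in\ker\rho^\G$ combined with $\rho\ge 0$ forces $\rho_{20,20}=\rho_{21,21}=-\rho_{20,21}$ and hence $\ket{20}+\ket{21}\in\ker\rho$; the generic fourth vector, under $a,b\ne 0$ and $f\ne 0,1$, contributes one further independent direction in $\ker\rho$. The resulting block decomposition of $\rho^\G$ is analyzed exactly as in (i), and the case analysis yields the slightly shorter $12$-array list of Theorem \ref{th:N_33}(i), which differs from the $13$-array list of (i) by exclusion of one additional array that the stronger structural constraint rules out.

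For part (iii), suppose for contradiction that $\ln(\rho^\G)\in\{(3,3,3),(2,4,3)\}$. In both cases $v_-+v_0=6>(3-1)(3-1)+1$, so Lemma \ref{le:2} yields infinitely many pairwise linearly independent product vectors in the non-positive eigenspace of $\rho^\G$, and Lemma \ref{le:mn}(i) places them all in $\ker\rho^\G$. Since $\dim\ker\rho^\G=v_0\in\{3,4\}$, Lemma \ref{le:9}(ii) (when $v_0=3$) or Lemma \ref{le:9}(iii) (when $v_0=4$) shows that, up to SLOCC and system permutation, $\ker\rho^\G$ contains one of the spanning product-vector sets treated in Lemma \ref{le:12}(i), in (i), or in (ii) above. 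Since Sylvester's theorem makes $\ln(\rho^\G)$ an SLOCC invariant, each such case forces $\ln(\rho^\G)$ into the corresponding $12$- or $13$-array list of Theorem \ref{th:N_33}(i); as $(3,3,3)$ and $(2,4,3)$ appear in none of these lists, the contradiction follows. The main obstacle is the case analysis in (i) and (ii): once the kernel conditions pin down the support of $\rho$, tracking $\ln(\rho^\G)$ through all the block-structural possibilities (the value of $w$, the ranks of the diagonal blocks, and the zero/nonzero pattern of the couplings $X$, $Y$ and $z$) is combinatorially delicate, and the bookkeeping must be careful enough to reproduce the claimed $13$- (resp.\ $12$-) array list without duplication or omission.
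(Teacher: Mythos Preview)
Your treatment of part (iii) is essentially the same as the paper's: combine Lemma~\ref{le:2} and Lemma~\ref{le:mn}(i) to put infinitely many product vectors into $\ker\rho^\G$, invoke Lemma~\ref{le:9}(ii)--(iii) to classify the resulting kernel up to SLOCC, and then appeal to (i), (ii), or Lemma~\ref{le:12}(i). Your setup for (i) and (ii) also begins the same way as the paper: using $\bra{u,v}\rho^\G\ket{u,v}=\bra{u^*,v}\rho\ket{u^*,v}$ and $\rho\ge 0$ to transfer the kernel hypotheses from $\rho^\G$ to $\rho$, and then using PSD to kill rows/columns. Both you and the paper arrive at $\rho$ supported on the L-shaped subspace $\lin\{\ket{02},\ket{12},\ket{20},\ket{21},\ket{22}\}$ in case (i).

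Where you diverge is what happens next. You propose a direct block decomposition of $\rho^\G$ on this support, followed by a multi-branch case analysis on the coupling parameter $w=\rho_{02,21}$, the diagonal block ranks, and the scalar $z$---and you correctly flag this bookkeeping as the main obstacle. The paper bypasses all of it: once $\rho$ has the L-shaped support, the paper observes that $\ket{0,0}$ and $\ket{0,1}$ lie in $\ker\rho^\G$ and immediately invokes Lemma~\ref{le:12}(i), which already contains the full inertia classification. In case (ii) the paper's reduction is even sharper: the four kernel vectors force $\rho$ to be supported only on $\lin\{\ket{02},\ket{12},\ket{22}\}$, after which the same reduction to Lemma~\ref{le:12}(i) applies. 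So the paper's route is a one-line reduction where yours is an open-ended case split; what you gain is independence from Lemma~\ref{le:12}, but what you lose is brevity and certainty, since your case analysis is only sketched. If you want to complete your proof along the paper's lines, the missing observation is simply that the support structure you already derived forces the additional product vector $\ket{0,1}$ into $\ker\rho^\G$, and then Lemma~\ref{le:12}(i) finishes the job.
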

\begin{proof}
	(i) If $\ket{0,0} \in \ker \rho^{\Gamma}$, then we have $\rho^{\Gamma} \ket{0,0} =0$. Then we have that the first row and column of $\rho^{\Gamma}$ consist of zero entries. Similarly, if $\ket{1,1} \in \ker \rho^{\Gamma}$, then we have $\rho^{\Gamma} \ket{1,1} =0$. Then we have that the fifth row and column of $\rho^{\Gamma}$ consist of zero entries. We have  $(\ket{0}+\ket{1})(\ket{0}+\ket{1}) \in \ker \rho^{\Gamma}$.
	
	So $\rho^{\Gamma}$ can be written as 	
	$	\bma 
	0 & 0 & 0 & 0 & 0 & 0 & 0 & 0 & 0 \\
	0 & a & * & -a & 0 & * & * & * & * \\
	0 & b & * & -b & 0 & * & * & * & * \\
	0 & c & * & -c & 0 & * & * & * & * \\
	0 & 0 & 0 & 0 & 0 & 0 & 0 & 0 & 0 \\
	0 & d & * & -d & 0 & * & * & * & * \\
	0 & e & * & -e & 0 & * & * & * & * \\	
	0 & f & * & -f & 0 & * & * & * & * \\
	0 & g & * & -g & 0 & * & * & * & * 
	\ema.$ Because $ \ket{0,0}, \ket{1,1},  (\ket{0}+\ket{1})(\ket{0}+\ket{1}) \in \ker(\rho^{\Gamma})$, we have $ \ket{0,0}, \ket{1,1},  (\ket{0}+\ket{1})(\ket{0}+\ket{1})  \in \ker(\rho)$.
	Then $\rho$ can be written as	
	$	\bma 
	0 & 0 & 0 & 0 & 0 & 0 & 0 & 0 & 0 \\
	0 & 0 & 0 & 0 & 0 & 0 & 0 & 0 & h^* \\
	0 & 0 & * & 0 & 0 & * & 0 & g & * \\
	0 & 0 & 0 & 0 & 0 & 0 & 0 & 0 & -h^* \\
	0 & 0 & 0 & 0 & 0 & 0 & 0 & 0 & 0 \\
	0 & 0 & * & 0 & 0 & * & -g & 0 & * \\
	0 & 0 & 0 & 0 & 0 & -g^* & * & * & * \\
	0 & 0 & g^* & 0 & 0 & 0 & * & * & * \\
	0 & h & * & -h & 0 & * & * & * & * 
	\ema.$
	
	Because $\rho$ is a  positive semi-definite matrix, then $\rho$ can be written as 
	$	\bma 
	0 & 0 & 0 & 0 & 0 & 0 & 0 & 0 & 0 \\
	0 & 0 & 0 & 0 & 0 & 0 & 0 & 0 & 0 \\
	0 & 0 & * & 0 & 0 & * & 0 & g & * \\
	0 & 0 & 0 & 0 & 0 & 0 & 0 & 0 & 0 \\
	0 & 0 & 0 & 0 & 0 & 0 & 0 & 0 & 0 \\
	0 & 0 & * & 0 & 0 & * & -g & 0 & * \\
	0 & 0 & 0 & 0 & 0 & -g^* & * & * & * \\
	0 & 0 & g^* & 0 & 0 & 0 & * & * & * \\
	0 & 0 & * & 0 & 0 & * & * & * & * 
	\ema.$ 
	
	Then we obtain that $\ket{0,0}$ and $\ket{0,1}$ are in the  $\ker (\rho^{\Gamma})$. We obtain that $\ln (\rho^{\Gamma})$ is one of the $13$ arrays in Theorem \ref{th:N_33}(i) from Lemma \ref{le:12}(i). 
	
	(ii) If $\ket{0,0} \in \ker \rho^{\Gamma}$, then we have $\rho^{\Gamma} \ket{0,0} =0$. Then we have that the first row and column of $\rho^{\Gamma}$ consist of zero entries. Similarly, if $\ket{1,1} \in \ker \rho^{\Gamma}$, then we have $\rho^{\Gamma} \ket{1,1} =0$. Then we have that the fifth row and column of $\rho^{\Gamma}$ consist of zero entries. We have  $\ket{2}(\ket{0}+\ket{1}) \in \ker \rho^{\Gamma}$. Then we have that $\rho^{\Gamma}\ket{2}(\ket{0}+\ket{1})=0$.
	
	So $\rho^{\Gamma}$ can be written as 	
	$	\bma 
	0 & 0 & 0 & 0 & 0 & 0 & 0 & 0 & 0 \\
	0 & * & * & * & 0 & * & l & -l & * \\
	0 & * & * & * & * & * & m & -m & * \\
	0 & * & * & * & 0 & * & n & -n & * \\
	0 & 0 & 0 & 0 & 0 & 0 & 0 & 0 & 0 \\
	0 & * & * & * & 0 & * & o & -o & * \\
	0 & * & * & * & 0 & * & p & -p & * \\
	0 & * & * & * & 0 & * & q & -q & * \\
	0 & * & * & * & 0 & * & r & -r& * 
	\ema.$
	Because $ \ket{0,0}, \ket{1,1},  \ket{2}(\ket{0}+\ket{1}), (a\ket{0}+b\ket{1}+\ket{2})(\ket{0}+f\ket{1}) \in \ker(\rho^{\Gamma})$, we have $ \ket{0,0}, \ket{1,1}, \ket{2}(\ket{0}+\ket{1}), (a\ket{0}+b\ket{1}+\ket{2})(\ket{0}+f\ket{1})  \in \ker(\rho)$. Then we have that $\rho \ket{0,0} =0$, $\rho \ket{1,1} =0$, $\rho \ket{2}(\ket{0}+\ket{1})=0$ and  $\rho (a\ket{0}+b\ket{1}+\ket{2})(\ket{0}+f\ket{1})=0$. 
	Then $\rho$ can be written as 
	$	\bma 
	0 & 0 & 0 & 0 & 0 & 0 & 0 & 0 & 0 \\
	0 & 0 & * & 0 & 0 & 0 & 0 & 0 & * \\
	0 & 0 & * & 0 & 0 & * & 0 & 0 & * \\
	0 & 0 & 0 & 0 & 0 & * & 0 & 0 & * \\
	0 & 0 & 0 & 0 & 0 & 0 & 0 & 0 & 0 \\
	0 & 0 & * & 0 & 0 & * & 0 & 0 & * \\
	0 & 0 & 0 & 0 & 0 & * & 0 & 0 & * \\
	0 & 0 & * & 0 & 0 & * & 0 & 0 & * \\
	0 & 0 & * & 0 & 0 & * & 0 & 0 & * 
	\ema.$ 
	Because $\rho$ is a  positive semi-definite matrix and $\rho$ is a hermitian matrix, then $\rho$ can be written as 
	$	\bma 
	0 & 0 & 0 & 0 & 0 & 0 & 0 & 0 & 0 \\
	0 & 0 & 0 & 0 & 0 & 0 & 0 & 0 & 0 \\
	0 & 0 & * & 0 & 0 & * & 0 & 0 & * \\
	0 & 0 & 0 & 0 & 0 & 0 & 0 & 0 & 0 \\
	0 & 0 & 0 & 0 & 0 & 0 & 0 & 0 & 0 \\
	0 & 0 & * & 0 & 0 & * & 0 & 0 & * \\
	0 & 0 & 0 & 0 & 0 & 0 & 0 & 0 & 0 \\
	0 & 0 & 0 & 0 & 0 & 0 & 0 & 0 & 0 \\
	0 & 0 & * & 0 & 0 & * & 0 & 0 & * 
	\ema.$ 
	
	Then we obtain that $\ket{0,0}$ and $\ket{0,1}$ are in the  $\ker (\rho^{\Gamma})$. We obtain that $\ln (\rho^{\Gamma})$ is one of the $13$ arrays in Theorem \ref{th:N_33}(i) from Lemma \ref{le:12}(i).
	
	(iii) If $\ln (\rho^{\Gamma})= (3,3,3)$ or $(2,4,3)$, we obtain that the non-positive eigen-space of $\rho$ has infinitely many product vectors form Lemma \ref{le:2}. Then we obtain that the zero eigen-space of $\rho$ has infinitely many product vectors from Lemma \ref{le:mn}(i). If $\ln (\rho^{\Gamma})= (3,3,3)$, then the zero eigen-space of $\rho$ is up to  
	SLOCC equivalence  spanned by $\{\ket{0,0},\ket{0,1}\}$ or  $\{\ket{0,0},\ket{1,1}, (\ket{0}+\ket{1})(\ket{0}+\ket{1})\}$ by Lemma \ref{le:9}(ii).  If $\ln (\rho^{\Gamma})= (2,4,3)$, then the zero eigen-space of $\rho$ is up to  
	SLOCC equivalence  spanned by $\{\ket{0,0},\ket{0,1}\}$ or $\{\ket{0,0},\ket{1,1}, (\ket{0}+\ket{1})(\ket{0}+\ket{1})\}$ or $\{\ket{0,0}, \ket{1,1}, \ket{2}(\ket{0}+\ket{1}), (a\ket{0}+b\ket{1}+\ket{2})(\ket{0}+f\ket{1})\}$ satisfying the conditions $a,b\neq 0$  and $f\neq 0,1$ by Lemma \ref{le:9}(iii). Then in these two cases we obtain that $\ket{0,0}$ and $\ket{0,1}$ are in the  $\ker (\rho^{\Gamma})$. $\ln (\rho^{\Gamma})$ is one of the 12 arrays in Theorem \ref{th:N_33}(i) from (i) and (ii). Then we derive a contradiction. So we have  $(3,3,3), (2,4,3)\notin  \mathcal{N}_{3,3}$.
\end{proof}
We propose an example to understand the two-qutrit bipartite NPT states in Example \ref{ex:11}. Then we discussed all cases in the two-qutrit bipartite NPT states in detail in Theorem \ref{th:NPT2}.
\begin{example}
	\label{ex:11}
	Suppose $\rho =(\ket{0,0}+\ket{1,1}+\ket{2,2})(\bra{0,0}+\bra{1,1}+\bra{2,2})+(\ket{0}(a\ket{0}+b\ket{1}))(\bra{0}(a^*\bra{0}+b^*\bra{1}))$ is a two-qutrit bipartite NPT state. 
	
	Then $\rho$ can be written as	
	$	\bma 
	1+\left| a \right|^2 & ab^* & 0 & 0 & 1 & 0 & 0 & 0 & 1 \\
	a^*b & \left| b \right|^2 & 0 & 0 & 0 & 0 & 0 & 0 & 0 \\
	0 & 0 & 0 & 0 & 0 & 0 & 0 & 0 & 0 \\
	0 & 0 & 0 & 0 & 0 & 0 & 0 & 0 & 0 \\
	1 & 0 & 0 & 0 & 1 & 0 & 0 & 0 & 1 \\
	0 & 0 & 0 & 0 & 0 & 0 & 0 & 0 & 0 \\
	0 & 0 & 0 & 0 & 0 & 0 & 0 & 0 & 0 \\	
	0 & 0 & 0 & 0 & 0 & 0 & 0 & 0 & 0 \\	
	1 & 0 & 0 & 0 & 1 & 0 & 0 & 0 & 1 \\
	\ema$.
	Then $\rho^\G$ can be written as	
	$	\bma 
	1+\left| a \right|^2 & ab^* & 0 & 0 & 0 & 0 & 0 & 0 & 0 \\
	a^*b & \left| b \right|^2 & 0 & 1 & 0 & 0 & 0 & 0 & 0 \\
	0 & 0 & 0 & 0 & 0 & 0 & 1 & 0 & 0 \\
	0 & 1 & 0 & 0 & 0 & 0 & 0 & 0 & 0 \\
	0 & 0 & 0 & 0 & 1 & 0 & 0 & 0 & 0 \\
	0 & 0 & 0 & 0 & 0 & 0 & 0 & 1 & 0 \\
	0 & 0 & 1 & 0 & 0 & 0 & 0 & 0 & 0 \\	
	0 & 0 & 0 & 0 & 0 & 1 & 0 & 0 & 0 \\	
	0 & 0 & 0 & 0 & 0 & 0 & 0 & 0 & 1 \\
	\ema$.
	Then we obtain that $\det \r^\G=-1-\left| a \right|^2$. We get that $\r^\G$ has six constant eigenvalues, $-1,-1,1,1,1,1$. The other three eigenvalues of $\r^\G$ are the three roots of $x^3+(-1-\left| a \right|^2-\left| b \right|^2)x^2+(-1+\left| b \right|^2)x+1+\left| a \right|^2$. Suppose these three roots are $x_1>x_2>x_3$ respectively. We obtain that $x_1+x_2+x_3=1-\left| b \right|^2$ and $x_1x_2x_3=-1-\left| a \right|^2$. Then we obtain that three eigenvalues are $1$, $\frac{1}{2}(\left| a \right|^2+\left| b \right|^2-\sqrt{4+4\left| a \right|^2+\left| a \right|^4+2\left| a \right|^2\left| b \right|^2+\left| b \right|^4})$ and $\frac{1}{2}(\left| a \right|^2+\left| b \right|^2+\sqrt{4+4\left| a \right|^2+\left| a \right|^4+2\left| a \right|^2\left| b \right|^2+\left| b \right|^4})$. We get that $x_1,x_2>0$ and $x_3<0$. So nine eigenvalues of $\r^\G$ are $-1,-1,x_3,1,1,1,1,x_1,x_2$. The interia $\ln \r^\G=(3,0,6)$.
	
\end{example}

In this example, we investigate the interia of a two-qutrit bipartite NPT state $\r=\proj{\a}+\proj{\b}$ of rank two. We define the case of SR$(\ket{\a})=m$ and SR$(\b)=n$ as a binary array $(m,n)$. We get that $\r$ has six cases, $(1,1)$, $(2,1)$, $(2,2)$, $(3,1)$, $(3,2)$, $(3,3)$. In the case of $(1,1)$, $\r$ is a separable state. We need to get rid of $(1,1)$.  On the other hand, the case $(3,3)$ is equivalent to the one of the remaining four cases. So it suffices to study four cases of two-qutrit bipartite NPT states of rank two as follows.

\begin{theorem}
	\label{th:NPT2}
	Suppose $\r=\proj{\a}+\proj{\b}$ is a two-qutrit bipartite NPT state. 
	
	(i) If SR$(\ket{\a})=2$ and SR$(\ket{\b})=1$, then we have $\ln \r^\G=(1,4,4)$.
	
	(ii) If SR$(\ket{\a})=2$ and SR$(\ket{\b})=2$, then we have $\ln \r^\G= (2,2,5)$.	
	
	(iii) If SR$(\ket{\a})=3$ and SR$(\ket{\b})=1$, then we have $\ln \r^\G=(3,0,6)$.	
	
	(iv) If SR$(\ket{\a})=3$ and SR$(\ket{\b})=2$, then we have $\ln \r^\G=(2,2,5),(2,1,6),(2,0,7),(3,0,6),(3,1,5)$.
	
\end{theorem}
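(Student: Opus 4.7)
My plan is to use Sylvester's theorem --- the SLOCC-invariance of the inertia of $\r^\G$ --- to reduce each of the four cases to canonical pairs $(\ket{\a},\ket{\b})$, and then to compute $\r^\G$ explicitly in each canonical form. The two-qutrit (local-rank-$3$) hypothesis will be essential in constraining how $\ket{\b}$ sits relative to $\ket{\a}$, as it prevents $\r$ from being embedded in a strictly smaller bipartite subspace.

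The common first step is to normalize $\ket{\a}$ via SLOCC: in cases (i)-(ii) to $\ket{0,0}+\ket{1,1}$, supported in the $2\times 2$ upper-left block; in cases (iii)-(iv) to the maximally entangled state $\ket{0,0}+\ket{1,1}+\ket{2,2}$. Lemma~\ref{le:pre} then supplies the inertia of $\proj{\a}^\G$ directly: $(1,5,3)$ in the first pair of cases and $(3,0,6)$ in the second.

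For (i), the local-rank-$3$ condition forces the product $\ket{\b}=\ket{x,y}$ to have a nonzero $\ket{2}$-component on each side; the residual SLOCC stabilizer of $\ket{\a}$ then brings $\ket{\b}$ to $\ket{2,2}$. Because $\proj{\b}^\G=\proj{2,2}$ is rank-$1$ positive semidefinite and lies in a direction orthogonal to the range $V_A\ox V_B$ of $\proj{\a}^\G$, Weyl interlacing immediately gives $\ln\r^\G=(1,4,4)$. For (ii), the $2\times 2$ Schmidt supports of $\ket{\a}$ and $\ket{\b}$ must meet in a single line on each side; after reducing to canonical sub-cases (for example $\ket{\b}=\ket{0,0}+c\ket{2,2}$ or $\ket{\b}=\ket{0,2}+\ket{2,0}$) a direct eigenvalue computation yields $(2,2,5)$. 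For (iii), the stabilizer $\{(A,(A^T)^{-1}):A\in GL_3\}$ of $\ket{\a}$ is transitive on nonzero product vectors, so we may take $\ket{\b}=\ket{0,0}$, which is precisely the $b=0$ specialization of Example~\ref{ex:11}, confirming $(3,0,6)$.

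Case (iv) carries the main technical burden. Viewing $\ket{\b}$ as a $3\times 3$ matrix $X$ of rank $2$, the stabilizer of $\ket{\a}$ acts on $X$ by similarity $X\mapsto AXA^{-1}$, so $X$ may be brought to one of the five rank-$2$ Jordan canonical forms: $\mathrm{diag}(\l_1,\l_2,0)$ with $\l_1\neq\l_2$ both nonzero, $\mathrm{diag}(\l,\l,0)$, $J_2(\l)\op 0$ with $\l\neq 0$, $\l\op J_2(0)$ with $\l\neq 0$, and $J_3(0)$. For each canonical $X$ I would write $\r^\G$ explicitly as a $9\times 9$ matrix and factor its characteristic polynomial to read off the inertia; I expect each Jordan type to realize exactly one of the five listed inertias $(2,2,5),(2,1,6),(2,0,7),(3,0,6),(3,1,5)$. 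The main obstacle is twofold: first, verifying that the remaining continuous scalar parameter (the value of $\l$) does not move the inertia within a given Jordan class, which requires showing that the relevant eigenvalues of the resulting $9\times 9$ matrix do not change sign as $\l$ varies; second, in cases (ii) and (iv), carefully tracking the residual stabilizer so that the canonical sub-cases genuinely exhaust every SLOCC class allowed by the two-qutrit hypothesis.
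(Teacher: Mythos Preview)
Your treatment of (i)--(iii) is essentially the paper's: SLOCC-normalize $\ket{\a}$, pin down $\ket{\b}$ using the residual stabilizer, and compute. One genuine slip in (iii): the stabilizer $\{(A,(A^{T})^{-1})\}$ of $\ket{\a}=\sum_i\ket{i,i}$ is \emph{not} transitive on nonzero product vectors. The bilinear form $(\ket{x},\ket{y})\mapsto x^{T}y$ is invariant under this action, so there are two orbits, represented by $\ket{0,0}$ (pairing $\neq 0$) and $\ket{0,1}$ (pairing $=0$). Your reduction to the single point $\ket{\b}=\ket{0,0}$ misses the isotropic orbit; the paper handles both at once by keeping the parameters in $\ket{\b}=\ket{0}(a\ket{0}+b\ket{1})$ and invoking the full computation of Example~\ref{ex:11}.

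For (iv) your similarity/Jordan-form reduction $X\mapsto AXA^{-1}$ is a cleaner organizing principle than the paper's, which instead argues (somewhat opaquely) that one may take $\ket{\b}$ in one of four two-parameter families and then simplifies each $\r^{\G}$ by elementary row/column operations. However, your expectation that the five rank-$2$ Jordan types biject with the five listed inertias is wrong, and this is where the real work lies. Already the generic type $\mathrm{diag}(\l_1,\l_2,0)$ with $\l_1\neq\l_2$ nonzero yields several inertias as the eigenvalues vary: taking $X=\begin{pmatrix}0&b&0\\1&0&0\\0&0&0\end{pmatrix}$ (which is diagonalizable with eigenvalues $\pm\sqrt{b},0$ for $b>0$) one finds, following the paper's (iv.a), that $\ln\r^{\G}$ is $(2,0,7)$, $(2,1,6)$, or $(3,0,6)$ according as $|b|^{2}-1$ is positive, zero, or negative. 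So the continuous parameter \emph{does} move the inertia inside a fixed Jordan class, and the degenerate Jordan types do not pick up the missing inertias one-for-one (types $2$--$5$ all sit at special parameter values and, in the paper's computation, all give $(3,0,6)$). You will therefore still need an eigenvalue analysis across the parameter range, which in the end is comparable to the paper's four case-by-case computations.
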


\begin{proof}
	Suppose $\r=\proj{\a}+\proj{\b}$ is a two-qutrit bipartite NPT state,
	\begin{eqnarray}
		\label{eq:NPT}
		\r=\proj{\a}+\proj{\b}
	\end{eqnarray}.
	
	(i) If SR$(\ket{\a})=2$ in Eq\eqref{eq:NPT}, then we have that $\ket{\a}$ is up to SLOCC equivalent to $\ket{0,0}+\ket{1,1}$.  If SR$(\ket{\b})=1$, then we have that $\ket{\b}$ is up to LOCC equivalent to $\ket{2,2}$. Then we have $\r^\G=\bma 
	1 & 0 & 0 & 0 & 0 & 0 & 0 & 0 & 0 \\
	0 & 0 & 0 & 1 & 0 & 0 & 0 & 0 & 0 \\
	0 & 0 & 0 & 0 & 0 & 0 & 0 & 0 & 0 \\
	0 & 1 & 0 & 0 & 0 & 0 & 0 & 0 & 0 \\
	0 & 0 & 0 & 0 & 1 & 0 & 0 & 0 & 0 \\
	0 & 0 & 0 & 0 & 0 & 0 & 0 & 0 & 0 \\
	0 & 0 & 0 & 0 & 0 & 0 & 0 & 0 & 0 \\	
	0 & 0 & 0 & 0 & 0 & 0 & 0 & 0 & 0 \\	
	0 & 0 & 0 & 0 & 0 & 0 & 0 & 0 & 1 \\
	\ema$. Then we get that nine eigenvalues are $-1,0,0,0,0,1,1,1,1$. Then $\ln \r^\G=(1,4,4) $. In conclusion, we obtain that if SR$(\ket{\a})=2$ and SR$(\ket{\b})=1$, then we have $\ln \r^\G=(1,4,4)$. 
	
	(ii) If SR$(\ket{\a})=2$ in Eq\eqref{eq:NPT}, then we have that $\ket{\a}$ is up to SLOCC equivalent to $\ket{0,0}+\ket{1,1}$.  If SR$(\ket{\b})=2$, then we have that $\ket{\b}$ cotains $\ket{2,y_2}$ and $(a\ket{0}+b\ket{1})(\ket{y_2})$. Because $(a\ket{0}+b\ket{1})(\ket{y_0})$ is equivalent to $\ket{0,y_0}$, we have that $\ket{\b}$ is up to SLOCC equivalent to $\ket{0,y_0}+\ket{2,y_2}$. We get that at least one of $\ket{y_0}$ and $\ket{y_2}$ contains $\ket{2}$. So $\ket{\b}$ has two cases. One case is that $\ket{\b}$ is up to SLOCC equivalent to $\ket{0}(a\ket{0}+b\ket{1})+\ket{2,2}$, the other case is  that $\ket{\b}$ is up to SLOCC equivalent to $\ket{0,2}+\ket{2}(a\ket{0}+b\ket{1})$.

	(ii.a) Suppose $\ket{\b}=\ket{0}(a\ket{0}+b\ket{1})+\ket{2,2}$ in Eq\eqref{eq:NPT}.
	Then we have $\r^\G=\bma
	1+\left| a \right|^2 & ab^* & 0 & 0 & 0 & 0 & 0 & 0 & 0 \\
	ba^* & \left| b \right|^2 & 0 & 1 & 0 & 0 & 0 & 0 & 0 \\
	0 & 0 & 0 & 0 & 0 & 0 & a^* & b^* & 0 \\
	0 & 1 & 0 & 0 & 0 & 0 & 0 & 0 & 0 \\
	0 & 0 & 0 & 0 & 1 & 0 & 0 & 0 & 0 \\
	0 & 0 & 0 & 0 & 0 & 0 & 0 & 0 & 0 \\
	0 & 0 & a & 0 & 0 & 0 & 0 & 0 & 0 \\	
	0 & 0 & b & 0 & 0 & 0 & 0 & 0 & 0 \\	
	0 & 0 & 0 & 0 & 0 & 0 & 0 & 0 & 1 \\
	\ema$. Then we get that nine eigenvalues are $-(a^2+b^2)^{-\frac{1}{2}},\frac{1}{2}(a^2+b^2-(4+4a^2+a^4+2a^2b^2+b^4)^{-\frac{1}{2}},0,0,1,1,1,(a^2+b^2)^{-\frac{1}{2}},\frac{1}{2}(a^2+b^2-(4+4a^2+a^4+2a^2b^2+b^4)^{-\frac{1}{2}}$. Then $\ln \r^\G=(2,2,5)$.

	(ii.b) Suppose $\ket{\b}=\ket{0,2}+\ket{2}(a\ket{0}+b\ket{1})$ in Eq\eqref{eq:NPT}.
	
	Then we have $\r^\G=\bma 
	1 & 0 & 0 & 0 & 0 & 0 & 0 & 0 & a \\
	0 & 0 & 0 & 1 & 0 & 0 & 0 & 0 & b \\
	0 & 0 & 1 & 0 & 0 & 0 & 0 & 0 & 0 \\
	0 & 1 & 0 & 0 & 0 & 0 & 0 & 0 & 0 \\
	0 & 0 & 0 & 0 & 1 & 0 & 0 & 0 & 0 \\
	0 & 0 & 0 & 0 & 0 & 0 & 0 & 0 & 0 \\
	0 & 0 & 0 & 0 & 0 & 0 & \left| a \right|^2 & ab^* & 0 \\	
	0 & 0 & 0 & 0 & 0 & 0 & ba^* & \left| b \right|^2 & 0 \\	
	a^* & b^* & 0 & 0 & 0 & 0 & 0 & 0 & 0 \\
	\ema$. Then we get that six eigenvalues are $0,0,1,1,1,a^2+b^2$. The other three eigenvalues are the roots of $x^3+(-1-a^2-b^2)x-a^2$. These three roots are two negative roots and one positive root. Then $\ln \r^\G=(2,2,5)$. In conclusion, we obtain that if SR$(\ket{\a})=2$ and SR$(\ket{\b})=2$, then we have $\ln \r^\G=(2,2,5)$. 
	
	(iii) If SR$(\ket{\a})=3$ in Eq\eqref{eq:NPT}, then we have that $\ket{\a}$ is up to SLOCC equivalent to $\ket{0,0}+\ket{1,1}+\ket{2,2}$.  If SR$(\ket{\b})=1$, then we have that $\ket{\b}$ is up to SLOCC equivalent to $\ket{0}(a\ket{0}+b\ket{1})$.
	From Example \ref{ex:11}, we get that $\ln \r^\G=(3,0,6)$.
	
	(iv) If SR$(\ket{\a})=3$ in Eq\eqref{eq:NPT}, then we have that $\ket{\a}$ is up to SLOCC equivalent to $\ket{0,0}+\ket{1,1}+\ket{2,2}$.  If SR$(\ket{\b})=2$, then we have that $\ket{\b}$ is up to SLOCC equivalent to $\ket{0}(a\ket{0}+b\ket{1}+c\ket{2})+\ket{1}(d\ket{0}+e\ket{1})$. Suppose $M$ is a two-qutrit invertible matrix. There exists a two-qutrit invertible matrix $M$, such that $(M(M^T)^{-1}\otimes I_3)(\ket{0,0}+\ket{1,1}+\ket{2,2})=(\ket{0,0}+\ket{1,1}+\ket{2,2})$ and $M(M^T)^{-1}(\ket{0}(a\ket{0}+b\ket{1}+c\ket{2})+\ket{1}(d\ket{0}+e\ket{1}))=\ket{0}(a'\ket{0}+b'\ket{1}+c'\ket{2})+e'\ket{1,1}$ or $\ket{0}(a'\ket{0}+b'\ket{1}+c'\ket{2})+d'\ket{1,0}$. Then $\ket{\b}$ has two cases, $\ket{0}(a\ket{0}+b\ket{1}+c\ket{2})+e\ket{1,1}$ and $\ket{0}(a\ket{0}+b\ket{1}+c\ket{2})+d\ket{1,0}$. The case  $\ket{0}(a\ket{0}+b\ket{1}+c\ket{2})+d\ket{1,0}$ is SLOCC equivalent to the case $\ket{0}(a\ket{0}+b\ket{1}+c\ket{2})+\ket{1,0}$. There exists a two-qutrit invertible matrix $M_1$, such that $(M_1(M_1^T)^{-1}\otimes I_3)(\ket{0,0}+\ket{1,1}+\ket{2,2})=(\ket{0,0}+\ket{1,1}+\ket{2,2})$ and $(M_1(M_1^T)^{-1}\otimes I_3)(\ket{0}(a\ket{0}+b\ket{1}+c\ket{2})+d\ket{1,0})=\ket{0}(a''\ket{0}+b''\ket{1})+d''\ket{1,0}$ or $\ket{0}(a''\ket{0}+c''\ket{2})+d''\ket{1,0}$. The case  $\ket{0}(a\ket{0}+b\ket{1}+c\ket{2})+\ket{1,0}$ is divided into two cases $\ket{0}(a\ket{0}+b\ket{1})+\ket{1,0}$ and  $\ket{0}(a\ket{0}+c\ket{2})+\ket{1,0}$. Similarly, the case  $\ket{0}(a\ket{0}+b\ket{1}+c\ket{2})+e\ket{1,1}$ is divided into two cases $\ket{0}(a\ket{0}+b\ket{1})+e\ket{1,1}$ and  $\ket{0}(b\ket{1}+c\ket{2})+e\ket{1,1}$.  If SR$(\ket{\b})=2$, then $\ket{\b}$ has four cases $\ket{0}(a\ket{0}+b\ket{1})+\ket{1,0}$, $\ket{0}(a\ket{0}+c\ket{2})+\ket{1,0}$, $\ket{0}(a\ket{0}+b\ket{1})+e\ket{1,1}$, $\ket{0}(b\ket{1}+c\ket{2})+e\ket{1,1}$.
	
	(iv.a) Suppose $\ket{\b}=\ket{0}(a\ket{0}+b\ket{1})+\ket{1,0}$ in Eq\eqref{eq:NPT}. Then we have $\r^\G=\bma
	1+\left| a \right|^2 & ab^* & 0 & a^* & b^* & 0 & 0 & 0 & 0 \\
	ba^* & \left| b \right|^2 & 0 & 1 & 0 & 0 & 0 & 0 & 0 \\
	0 & 0 & 0 & 0 & 0 & 0 & 1 & 0 & 0 \\
	a & 1 & 0 & 1 & 0 & 0 & 0 & 0 & 0 \\
	b & 0 & 0 & 0 & 1 & 0 & 0 & 0 & 0 \\
	0 & 0 & 0 & 0 & 0 & 0 & 0 & 1 & 0 \\
	0 & 0 & 1 & 0 & 0 & 0 & 0 & 0 & 0 \\	
	0 & 0 & 0 & 0 & 0 & 1 & 0 & 0 & 0 \\	
	0 & 0 & 0 & 0 & 0 & 0 & 0 & 0 & 1 \\
	\ema$. We know that $A\r^\G A^\dagger$ and $\r^\G$ have the same inertia when $A$ is an elementary matrix. We obtain that $S=B\r^\G B^\dagger=\bma
	1 & 0 & 0 & 0 & 0 & 0 & 0 & 0 & 0 \\
	0 & \left| b \right|^2-\left| ab^*-a^* \right|^2-1 & 0 & 0 & 0 & 0 & 0 & 0 & 0 \\
	0 & 0 & 0 & 0 & 0 & 0 & 1 & 0 & 0 \\
	0 & 0 & 0 & 1 & 0 & 0 & 0 & 0 & 0 \\
	0 & 0 & 0 & 0 & 1 & 0 & 0 & 0 & 0 \\
	0 & 0 & 0 & 0 & 0 & 0 & 0 & 1 & 0 \\
	0 & 0 & 1 & 0 & 0 & 0 & 0 & 0 & 0 \\	
	0 & 0 & 0 & 0 & 0 & 1 & 0 & 0 & 0 \\	
	0 & 0 & 0 & 0 & 0 & 0 & 0 & 0 & 1 \\
	\ema$ has the same inertia as $\r^\G$, where $B$ is a product of some elementary matrices.  Then we get that the six eigenvalues of $\r^\G$ are $-1,-1,1,1,1,1,1,1, \left| b \right|^2-\left| ab^*-a^* \right|^2-1$. $\left| b \right|^2-\left| ab^*-a^* \right|^2-1$ can be a positive or zero or negative number. Then $\ln \r^\G=(3,0,6) $ when $\left| b \right|^2-\left| ab^*-a^* \right|^2-1<0$, $\ln \r^\G=(2,1,6) $ when $\left| b \right|^2-\left| ab^*-a^* \right|^2-1=0$, $\ln \r^\G=(2,0,7) $ when $\left| b \right|^2-\left| ab^*-a^* \right|^2-1>0$.
	
	(iv.b) Suppose $\ket{\b}=\ket{0}(a\ket{0}+c\ket{2})+\ket{1,0}$ in Eq\eqref{eq:NPT}. If $c=0$, then SR$(\ket{\b})=1$ and we reduce to (iii). Then we have $c\neq 0$ and $\bma
	1 & 0 & 0 \\
	0 & 1 & 0 \\
	0 & 0 & c \\
	\ema \otimes \bma
	1 & 0 & 0 \\
	0 & 1 & 0 \\
	0 & 0 & c^{-1} \\
	\ema \ket{\b}=\ket{0}(a\ket{0}+\ket{2})+\ket{1,0}$.  So we have $\ket{\b}$ is equivalent to $\ket{0}(a\ket{0}+\ket{2})+\ket{1,0}$. Then we have $\r^\G=\bma 
	1+\left| a \right|^2 & 0 & a & a^* & 0 & 1 & 0 & 0 & 0 \\
	0 & 0 & 0 & 1 & 0 & 0 & 0 & 0 & 0 \\
	a^* & 0 & 1 & 0 & 0 & 0 & 1 & 0 & 0 \\
	a & 1 & 0 & 1 & 0 & 0 & 0 & 0 & 0 \\
	0 & 0 & 0 & 0 & 1 & 0 & 0 & 0 & 0 \\
	1 & 0 & 0 & 0 & 0 & 0 & 0 & 1 & 0 \\
	0 & 0 & 1 & 0 & 0 & 0 & 0 & 0 & 0 \\	
	0 & 0 & 0 & 0 & 0 & 1 & 0 & 0 & 0 \\	
	0 & 0 & 0 & 0 & 0 & 0 & 0 & 0 & 1 \\
	\ema$. We have known that $A\r^\G A^\dagger$ and $\r^\G$ have the same inertia when $A$ is an elementary matrix. We obtain that $S=B\r^\G B^\dagger=\bma
	1+\left| a \right|^2 & 0 & 0 & 0 & 0 & 0 & 0 & 0 & 0 \\
	0 & 0 & 0 & 1 & 0 & 0 & 0 & 0 & 0 \\
	0 & 0 & 1 & 0 & 0 & 0 & 1 & 0 & 0 \\
	0 & 1 & 0 & 0 & 0 & 0 & 0 & 0 & 0 \\
	0 & 0 & 0 & 0 & 1 & 0 & 0 & 0 & 0 \\
	0 & 0 & 0 & 0 & 0 & 0 & 0 & 1 & 0 \\
	0 & 0 & 1 & 0 & 0 & 0 & 0 & 0 & 0 \\	
	0 & 0 & 0 & 0 & 0 & 1 & 0 & 0 & 0 \\	
	0 & 0 & 0 & 0 & 0 & 0 & 0 & 0 & 1 \\
	\ema$ has the same inertia as $\r^\G$, where $B$ is a product of some elementary matrices. Then we get that nine enigenvalues are $-1,-1,-1,1,1,1,1,1,1+\left| a \right|^2$. Then $\ln \r^\G=\ln S =(3,0,6)$.
	
	(iv.c) Suppose $\ket{\b}=\ket{0}(a\ket{0}+b\ket{1})+e\ket{1,1}$ in Eq\eqref{eq:NPT}. Then we have $\r^\G=\bma
	1+\left| a \right|^2 & ab^* & 0 & 0 & 0 & 0 & 0 & 0 & 0 \\
	ba^* & \left| b \right|^2 & 0 & 1+ea^* & eb^* & 0 & 0 & 0 & 0 \\
	0 & 0 & 0 & 0 & 0 & 0 & 1 & 0 & 0 \\
	0 & 1+ae^* & 0 & 0 & 0 & 0 & 0 & 0 & 0 \\
	0 & be^* & 0 & 0 & 1+\left| e \right|^2 & 0 & 0 & 0 & 0 \\
	0 & 0 & 0 & 0 & 0 & 0 & 0 & 1 & 0 \\
	0 & 0 & 1 & 0 & 0 & 0 & 0 & 0 & 0 \\	
	0 & 0 & 0 & 0 & 0 & 1 & 0 & 0 & 0 \\	
	0 & 0 & 0 & 0 & 0 & 0 & 0 & 0 & 1 \\
	\ema$. Suppose $1+ae^*\neq 0$.  Similar to (iv.a), we get that  $S=B\r^\G B^\dagger=\bma
	1+\left| a \right|^2 & 0 & 0 & 0 & 0 & 0 & 0 & 0 & 0 \\
	0 & 0 & 0 & 1+ea^* & 0 & 0 & 0 & 0 & 0 \\
	0 & 0 & 0 & 0 & 0 & 0 & 1 & 0 & 0 \\
	0 & 1+ae^* & 0 & 0 & 0 & 0 & 0 & 0 & 0 \\
	0 & 0 & 0 & 0 & 1+\left| e \right|^2 & 0 & 0 & 0 & 0 \\
	0 & 0 & 0 & 0 & 0 & 0 & 0 & 1 & 0 \\
	0 & 0 & 1 & 0 & 0 & 0 & 0 & 0 & 0 \\	
	0 & 0 & 0 & 0 & 0 & 1 & 0 & 0 & 0 \\	
	0 & 0 & 0 & 0 & 0 & 0 & 0 & 0 & 1 \\
	\ema$ has the same inertia as $\r^\G$, where $B$ is a product of some elementary matrices.  Then we get that nine enigenvalues are $-1,-1,-\sqrt{1+ae^*}\sqrt{1+ea^*},1,1,1,1+a^2,1+e^2,\sqrt{1+ae^*}\sqrt{1+ea^*}$. Then $\ln (\r^\G)=(3,0,6)$.
	If $1+ae^*=0$, we get that  $S'=B\r^\G B^\dagger=\bma
	1+\left| a \right|^2 & 0 & 0 & 0 & 0 & 0 & 0 & 0 & 0 \\
	0 & \left| b \right|^2-\frac{\left| be \right|^2}{1+\left| e \right|^2}-\frac{\left| ab \right|^2}{1+\left| a \right|^2} & 0 & 0 & 0 & 0 & 0 & 0 & 0 \\
	0 & 0 & 0 & 0 & 0 & 0 & 1 & 0 & 0 \\
	0 & 0 & 0 & 0 & 0 & 0 & 0 & 0 & 0 \\
	0 & 0 & 0 & 0 & 1+\left| e \right|^2 & 0 & 0 & 0 & 0 \\
	0 & 0 & 0 & 0 & 0 & 0 & 0 & 1 & 0 \\
	0 & 0 & 1 & 0 & 0 & 0 & 0 & 0 & 0 \\	
	0 & 0 & 0 & 0 & 0 & 1 & 0 & 0 & 0 \\	
	0 & 0 & 0 & 0 & 0 & 0 & 0 & 0 & 1 \\
	\ema$ has the same inertia as $\r^\G$, where $B$ is a product of some elementary matrices. Then we get that nine enigenvalues are $-1,-1,0,1,1,1,1+a^2,1+e^2,\left| b \right|^2-\frac{\left| be \right|^2}{1+\left| e \right|^2}-\frac{\left| ab \right|^2}{1+\left| a \right|^2}$. $\left| b \right|^2-\frac{\left| be \right|^2}{1+\left| e \right|^2}-\frac{\left| ab \right|^2}{1+\left| a \right|^2}$ can be a positive or zero or negative number. Then $\ln (\r^\G)=(3,1,5),(2,2,5),(2,1,6)$.
	In conclusion, we obtain that $\ln (\r^\G)=(3,0,6),(3,1,5),(2,2,5),(2,1,6)$.

	(iv.d) Suppose $\ket{\b}=\ket{0}(b\ket{1}+c\ket{2})+e\ket{1,1}$ in Eq\eqref{eq:NPT}. If $c=0$, then SR$(\ket{\b})=1$ and we reduce to (iii).  Then we have $c\neq 0$ and $\bma
	1 & 0 & 0 \\
	0 & 1 & 0 \\
	0 & 0 & c \\
	\ema \otimes \bma
	1 & 0 & 0 \\
	0 & 1 & 0 \\
	0 & 0 & c^{-1} \\
	\ema \ket{\b}=\ket{0}(a\ket{0}+\ket{2})+\ket{1,0}$. So $\ket{\b}$ is equivalent to $\ket{0}(b\ket{1}+\ket{2})+e\ket{1,1}$. Then we have $\r^\G=\bma
	1 & 0 & 0 & 0 & 0 & 0 & 0 & 0 & 0 \\
	0 & \left| b \right|^2 & b & 1 & eb^* & e & 0 & 0 & 0 \\
	0 & b^* & 1 & 0 & 0 & 0 & 1 & 0 & 0 \\
	0 & 1 & 0 & 0 & 0 & 0 & 0 & 0 & 0 \\
	0 & be^* & 0 & 0 & 1+\left| e \right|^2 & 0 & 0 & 0 & 0 \\
	0 & e^* & 0 & 0 & 0 & 0 & 0 & 1 & 0 \\
	0 & 0 & 1 & 0 & 0 & 0 & 0 & 0 & 0 \\	
	0 & 0 & 0 & 0 & 0 & 1 & 0 & 0 & 0 \\	
	0 & 0 & 0 & 0 & 0 & 0 & 0 & 0 & 1 \\
	\ema$. Similar to (iv.b), we get that  $S=B\r^\G B^\dagger=\bma
	1 & 0 & 0 & 0 & 0 & 0 & 0 & 0 & 0 \\
	0 & 0 & 0 & 1 & 0 & 0 & 0 & 0 & 0 \\
	0 & 0 & 0 & 0 & 0 & 0 & 1 & 0 & 0 \\
	0 & 1 & 0 & 0 & 0 & 0 & 0 & 0 & 0 \\
	0 & 0 & 0 & 0 & 1+\left| e \right|^2 & 0 & 0 & 0 & 0 \\
	0 & 0 & 0 & 0 & 0 & 0 & 0 & 1 & 0 \\
	0 & 0 & 1 & 0 & 0 & 0 & 0 & 0 & 0 \\	
	0 & 0 & 0 & 0 & 0 & 1 & 0 & 0 & 0 \\	
	0 & 0 & 0 & 0 & 0 & 0 & 0 & 0 & 1 \\
	\ema$ has the same inertia as $\r^\G$, where $B$ is a product of some elementary matrices.  Then we get that nine enigenvalues are $-1,-1,-1,1,1,1,1,1,1+\left| e \right|^2$. Then $\ln \r^\G=\ln S =(3,0,6)$.

	In conlusion, if SR$(\ket{\a})=3$ and SR$(\ket{\b})=2$, then we have $\ln \r^\G=(2,2,5),(2,1,6),(2,0,7),(3,0,6),(3,1,5)$.
	
\end{proof}
\subsection{Main conclusion}
\label{subsec:4_2}	
We have proposed some observations about two-qutrit bipartite NPT states  in Subsec. \ref{subsec:4_1}. Then we investigate the set $\mathcal{N}_{3,3}$. We have the following observations.
\begin{theorem}
	\label{th:N_33}
	(i) $\mathcal{N}_{3,3}$ has the following $13$ arrays,
	$(1,0,8)$, $(1,1,7)$, $(1,2,6)$, $(1,3,5)$, $(1,4,4)$, $(1,5,3)$, $(2,0,7)$, $(2,1,6)$, $(2,2,5)$, $(2,3,4)$, $(3,0,6)$, $(3,1,5)$, $(4,0,5)$.
	
	(ii) $\mathcal{N}_{3,3}$  does not have the following three arrays,
	$(2,4,3)$, $(3,3,3)$, $(4,2,3)$.
	
	(iii) $\mathcal{N}_{3,3}$ might have the following two arrays,
	$(3,2,4)$, $(4,1,4)$.
	
	(iv) $13\leq |\mathcal{N}_{3,3}|\leq 15$,  $\mathcal{N}_{3,3}$ is a subset of above (i) and (iii)'s arrays.
	
\end{theorem}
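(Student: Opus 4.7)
The plan is to partition all admissible inertia triples for $\rho^\Gamma$ with $\rho$ a $3\times 3$ NPT state, and then separately establish non-realizability for three of them and realizability for thirteen of them, leaving two undecided. Since $\rho^\Gamma$ is a $9\times 9$ Hermitian matrix, $v_-+v_0+v_+=9$. The NPT hypothesis forces $v_-\ge 1$; Lemma \ref{le:mn}(ii) forces $v_-\le(3-1)(3-1)=4$; Lemma \ref{le:c=2} forces $v_+\ge 3$. Enumerating the triples satisfying these three bounds produces exactly $18$ candidates: six with $v_-=1$, five with $v_-=2$, four with $v_-=3$, three with $v_-=4$.

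To obtain part (ii), I would invoke Lemma \ref{le:12}(ii) to exclude $(4,2,3)$ and Lemma \ref{le:13}(iii) to exclude $(3,3,3)$ and $(2,4,3)$. Declaring $(3,2,4)$ and $(4,1,4)$ as currently undecided then gives part (iii). The $13$ triples remaining from $18-3-2$ are exactly those listed in part (i) and must each be realized explicitly. The key workhorse is the lifting Lemma \ref{le:rel}(ii). Applied to the single two-qubit NPT inertia $(1,0,3)\in\mathcal{N}_{2,2}$ with $m_1=n_1=2$, $m_2=n_2=3$, $l\in[0,5]$, it produces all six $v_-=1$ triples $(1,5-l,3+l)$. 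Applied to the element $(2,0,4)\in\mathcal{N}_{2,3}$ (with $m_1=2$, $n_1=3$, $m_2=n_2=3$, $l\in[0,3]$), it produces the four $v_-=2$ triples $(2,3,4),(2,2,5),(2,1,6),(2,0,7)$. The inertias $(3,0,6)$ and $(3,1,5)$ are then realized by Theorem \ref{th:NPT2}(iii) and (iv.c) respectively; alternatively, $(3,0,6)$ follows from Lemma \ref{le:pre} applied to any Schmidt-rank-$3$ pure state in $\mathbb{C}^3\otimes\mathbb{C}^3$.

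The only remaining entry is $(4,0,5)$. For this I would appeal to Lemma \ref{le:mn}(ii), which supplies a decomposable EW with exactly $v_-=(3-1)(3-1)=4$; choosing it in the form $\rho^\Gamma$ (e.g.\ as the partial transpose of a mixture of Schmidt-rank-$3$ pure states whose kernels leave no residual NPT direction) gives an NPT $\rho$ with $\ln(\rho^\Gamma)=(4,k,5-k)$ for some $k\ge 0$, and Lemma \ref{le:rel}(i) with a small perturbation $xI_9$ collapses the $k$ zero eigenvalues into positive ones while continuity of the spectrum preserves $v_-=4$. Hence $(4,0,5)\in\mathcal{N}_{3,3}$. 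Finally, part (iv) is a direct count: the $18$ candidates split into $13$ from (i), $3$ from (ii), and $2$ from (iii), so $13\le|\mathcal{N}_{3,3}|\le 15$, and $\mathcal{N}_{3,3}$ is a subset of the arrays in (i) and (iii).

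The main obstacle is the construction of $(4,0,5)$: Lemma \ref{le:mn}(ii) alone only supplies a generic decomposable EW attaining the bound, and to force this EW to arise as the partial transpose of a genuine quantum state one needs either a careful algebraic construction or the above perturbation argument, where controlling the continuity-based kernel collapse is the delicate point. The second, unavoidable difficulty is the inability of the tools developed in Sections \ref{sec:I_1}--\ref{sec:I_2} to rule in or out $(3,2,4)$ and $(4,1,4)$; these triples lie exactly at the boundary where the product-vector counting arguments of Lemmas \ref{le:12}--\ref{le:13} cease to produce a contradiction, which is precisely why they are flagged in (iii) and deferred to numerical investigation in Section \ref{sec:app}.
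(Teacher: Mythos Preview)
Your argument matches the paper's almost step for step: enumerate the $18$ triples admitted by $1\le v_-\le 4$ and $v_+\ge 3$, exclude $(4,2,3)$, $(3,3,3)$, $(2,4,3)$ via Lemmas~\ref{le:12}--\ref{le:13}, and realize the $v_-=1,2$ triples by lifting from $\mathcal{N}_{2,2}$ or $\mathcal{N}_{2,3}$ through Lemma~\ref{le:rel}(ii), then $(3,0,6)$ and $(3,1,5)$ via Theorem~\ref{th:NPT2} (or Lemma~\ref{le:pre}). The paper uses $(1,2,3)\in\mathcal{N}_{2,3}$ rather than your $(1,0,3)\in\mathcal{N}_{2,2}$ for the $v_-=1$ family, but either seed works identically under Lemma~\ref{le:rel}(ii).

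The one genuine gap is your handling of $(4,0,5)$. Lemma~\ref{le:mn}(ii) produces a \emph{decomposable} EW $W=P+Q^\Gamma$ with four negative eigenvalues; it does not assert $P=0$, so you cannot simply ``choose'' $W$ in the form $\rho^\Gamma$, and your parenthetical description of a suitable mixture is too vague to stand on its own. The paper avoids this entirely by citing an explicit construction from \cite{2013Negative} (and later exhibiting a concrete rank-three state in Example~\ref{ex:13arrays}(xiii)). Your route can in fact be repaired: Weyl's inequality gives $\lambda_k(Q^\Gamma)\le\lambda_k(W)$ since $P\ge 0$, so $Q^\Gamma$ also has at least four negative eigenvalues; the bound $(m'-1)(n'-1)\ge 4$ then forces both reduced ranks of $Q$ to equal $3$, whence $Q/\tr Q$ is a genuine $3\times 3$ NPT state with $v_-=4$, and Lemma~\ref{le:rel}(i) yields $(4,0,5)$. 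But this chain of reasoning is absent from your proposal, and you correctly flag the step as the main obstacle.
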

\begin{proof}
	
	(i) We obtain that $(1,2,3), (2,0,4)\in \mathcal{N}_{2,3}$ by \cite[Corollary $11$]{2020Inertias}. Using Lemma \ref{le:rel} we obtain that,
	$(1,0,8), (1,1,7), (1,2,6), (1,3,5), (1,4,4), (1,5,3), (2,0,7), (2,1,6),\\ (2,2,5), (2,3,4)\in \mathcal{N}_{3,3}$. We obtain that $(3,0,6)\in \mathcal{N}_{3,3}$ by Lemma \ref{le:pre}. We obtain that $(4,0,5)\in \mathcal{N}_{3,3}$ by \cite[Example $2$]{2013Negative}. We obtain that $(3,1,5)\in \mathcal{N}_{3,3}$ by Theorem \ref{th:NPT2}.
	
	(ii) We have known that $v_-\leq 4$ by \cite[Lemma $5$]{2020Inertias} and $v_+\geq 3$ by Lemma \ref{th:N_33}. We have that $v_-+v_0+v_+=9$. Then  $\mathcal{N}_{3,3}$ migtht have the following five arrays,
	$(2,4,3)$, $(3,2,4)$, $(3,3,3)$, $(4,1,4)$, $(4,2,3)$. Then we have that $(2,4,3), (3,3,3), (4,2,3)\notin \mathcal{N}_{3,3}$ by Lemma \ref{le:12} and \ref{le:13}. 
	
	(iii) From (ii), we get that $\mathcal{N}_{3,3}$ might have the following two arrays,
	$(3,2,4)$, $(4,1,4)$. 
	
	(iii) We have $13\leq |\mathcal{N}_{3,3}|\leq 15$ by the results in (i),(ii),(iii) and $ \mathcal{N}_{3,3}$ is a subset of above (i) and (iii)'s $15$ arrays.
\end{proof}
Then we propose the following example to illustrate the Theorem \ref{th:N_33}(i).
\begin{example}
	\label{ex:13arrays}
	
Suppose $\rho$ is a two-qutrit bipartite NPT state. 
	(i) Suppose $\rho=(\ket{0,0}+\ket{1,1})(\bra{0,0}+\bra{1,1})+\frac{1}{10}I_9$. Then $\ln \r^\G=(1,0,8)$. 
	
	(ii) Suppose $\rho=(\ket{0,0}+\ket{1,1})(\bra{0,0}+\bra{1,1})+\frac{1}{10}I_9-\frac{1}{10}\ket{2,2}\bra{2,2}$. Then $\ln \r^\G=(1,1,7)$.
	
	(iii) Suppose $\rho=(\ket{0,0}+\ket{1,1})(\bra{0,0}+\bra{1,1})+\frac{1}{10}I_9-\frac{1}{10}(\ket{2,2}\bra{2,2}+\ket{2,1}\bra{2,1})$. Then $\ln \r^\G=(1,2,6)$.
	
	(iv) Suppose $\rho=(\ket{0,0}+\ket{1,1})(\bra{0,0}+\bra{1,1})+\frac{1}{10}I_9-\frac{1}{10}(\ket{2,2}\bra{2,2}+\ket{2,1}\bra{2,1}-\ket{2,0}\bra{2,0})$. Then $\ln \r^\G=(1,3,5)$.
	
	(v) Suppose $\rho=(\ket{0,0}+\ket{1,1})(\bra{0,0}+\bra{1,1})+(\ket{2,2})(\bra{2,2})$. Then $\ln \r^\G=(1,4,,4)$.
	
	(vi) Suppose $\rho=(\ket{0,0}+\ket{1,1})(\bra{0,0}+\bra{1,1})$. Then $\ln \r^\G=(1,5,3)$.
	
	(vii) Suppose $\rho=(\ket{0,0}+\ket{1,1}+\ket{2,2})(\bra{0,0}+\bra{1,1}+\bra{2,2})+(\ket{0,1}+\ket{1,0})(\bra{0,1}+\bra{1,0})$. Then $\ln \r^\G=(2,0,7)$.
	
	(viii) Suppose $\rho=(\ket{0,0}+\ket{1,1}+\ket{2,2})(\bra{0,0}+\bra{1,1}+\bra{2,2})+(\ket{0,0}+\ket{0,1}+\ket{1,1})(\bra{0,0}+\bra{0,1}+\bra{1,1})$. Then $\ln \r^\G=(2,1,6)$.
	
	(ix) Suppose $\rho=(\ket{0,0}+\ket{1,1})(\bra{0,0}+\bra{1,1})+(\ket{0,0}+\ket{0,1}+\ket{2,2})(\bra{0,0}+\bra{0,1}+\bra{2,2})$. Then $\ln \r^\G=(2,2,5)$.
	
	(x) Suppose $\rho=(\ket{0,0}+\ket{1,1})(\bra{0,0}+\bra{1,1})+(\ket{0,1}+\ket{0,1})(\bra{1,2}+\bra{1,2})$. Then $\ln \r^\G=(2,3,4)$.
	
	(xi) Suppose $\rho=(\ket{0,0}+\ket{1,1}+\ket{2,2})(\bra{0,0}+\bra{1,1}+\bra{2,2})+\ket{1,1}\bra{1,1}$. Then $\ln \r^\G=(3,0,6)$.
	
	(xii) Suppose $\rho=(\ket{0,0}+\ket{1,1}+\ket{2,2})(\bra{0,0}+\bra{1,1}+\bra{2,2})+(\ket{0,0}+2\ket{0,1}+2\ket{1,1})(\bra{0,0}+2\bra{0,1}+2\bra{1,1})$. Then $\ln \r^\G=(3,1,5)$.
	
	(xiii) Suppose $\rho=(\ket{0,0}+\frac{1}{4}\ket{1,1}+\ket{2,2})(\bra{0,0}+\frac{1}{4}\bra{1,1}+\bra{2,2})+(\ket{0,1}+\frac{1}{3}\ket{1,2}+\frac{1}{3}\ket{2,0})(\bra{0,1}+\frac{1}{3}\bra{1,2}+\frac{1}{3}\bra{2,0})+(\ket{0,2}+\frac{1}{2}\ket{1,0}+\ket{2,1})(\bra{0,2}+\frac{1}{2}\bra{1,0}+\bra{2,1})$. Then $\ln \r^\G=(4,0,5)$.
\end{example}

\section{application}
\label{sec:app}
In this section, we study the relationship between the inertias of $2\times 3$ states and those of $3\times 3$ states firstly. Next we extend some conclusions on the inertias  from  $2\times n$ states to
 $3\times n$ states. We also partially test the existence of two unverified inertias using python program.

 We have obtained that $\mathcal{N}_{2,3}=\{(1,2,3),(1,1,4),(1,0,5),(2,0,4)\}$ \cite{2020Inertias}. Then we have obtained that $(1,0,8)$, $(1,1,7)$, $(1,2,6)$, $(1,3,5)$, $(1,4,4)$, $(1,5,3)$, $(2,0,7)$, $(2,1,6)$, $(2,2,5)$, $(2,3,4)$, $(3,0,6)$, $(3,1,5)$, $(4,0,5)$ $\in \mathcal{N}_{3,3}$ in Theorem \ref{th:N_33}. By Theorem \ref{th:N_33} (i), we have also obtained that $(1,0,8)$, $(1,1,7)$, $(1,2,6)$, $(1,3,5)$, $(1,4,4)$, $(1,5,3)$ are from $(1,2,3)$, and $(2,0,7)$, $(2,1,6)$, $(2,2,5)$, $(2,3,4)$ are from $(2,0,4)$. Next we consider the remaining three inertias,  $(3,0,6)$, $(3,1,5)$ and $(4,0,5)$. From Example \ref{ex:13arrays} (xi), (xii), (xiii), we propose the following example.
\begin{example}
	\label{ex:3arrays}
	Suppose $\rho$ is a $2 \times 3$ bipartite NPT state. 
	
	(xi) Suppose $\rho=(\ket{0,0}+\ket{1,1}+\ket{1,2})(\bra{0,0}+\bra{1,1}+\bra{1,2})+\ket{1,1}\bra{1,1}$. Then $\ln \r^\G=(1,1,4)$.
	
	(xii) Suppose $\rho=(\ket{0,0}+\ket{1,1}+\ket{1,2})(\bra{0,0}+\bra{1,1}+\bra{2,2})+(\ket{0,0}+2\ket{0,1}+2\ket{1,1})(\bra{0,0}+2\bra{0,1}+2\bra{1,1})$. Then $\ln \r^\G=(2,0,4)$.
	
	(xiii) Suppose $\rho=(\ket{0,0}+\frac{1}{4}\ket{1,1}+\ket{1,2})(\bra{0,0}+\frac{1}{4}\bra{1,1}+\bra{1,2})+(\ket{0,1}+\frac{1}{3}\ket{1,2}+\frac{1}{3}\ket{1,0})(\bra{0,1}+\frac{1}{3}\bra{1,2}+\frac{1}{3}\bra{1,0})+(\ket{0,2}+\frac{1}{2}\ket{1,0}+\ket{1,1})(\bra{0,2}+\frac{1}{2}\bra{1,0}+\bra{1,1})$. Then $\ln \r^\G=(2,0,4)$.
\end{example} 

   From this example, we get that $(3,0,6)$ is from $(1,1,4)$, and $(3,1,5)$, $(4,0,5)$ are from $(2,0,4)$ by Example \ref{ex:3arrays}. We conclude the above findings in the following table.
   \begin{center}
   	\label{ta:1}
   \captionof{table}{the relationship between $2\times 3$ states and $3\times 3$ states}
   \begin{tabular}{|l|c|}
   	\hline
   	$\mathcal{N}_{2,3}$ & $\mathcal{N}_{3,3}$ \\
   	\hline
   	$(1,2,3)$ & $(1,5,3)$ \\ 
   	& $(1,4,4)$ \\ 
   	& $(1,3,5)$ \\ 
   	& $(1,2,6)$ \\ 
   	& $(1,1,7)$ \\ 
   	& $(1,0,8)$ \\ 
   	\hline
   	$(1,1,4)$ & $(3,0,6)$ \\
   	\hline
   	$(2,0,4)$ & $(2,3,4)$ \\
   	 & $(2,2,5)$ \\
   	 & $(2,1,6)$ \\
   	 & $(2,0,7)$ \\
   	 & $(3,1,5)$ \\
   	 & $(4,0,5)$ \\
   	 \hline
   \end{tabular} 
	\end{center}

	 So far we have investigated the $3\times 3$ states. Next we will extend the conclusions from low dimensional states to high dimensional states. For extending the conclusions on the inertias  from $2\times n$ states to $3\times n$ states, we consider the inertias of $\mathcal{N}_{3,n}$ in the following.
	\begin{lemma}
		\label{le:3n}
		There are at least $(n-1)(2n-1)$ inertias in $\mathcal{N}_{3,n}$, i.e.,
		
		$|\mathcal{N}_{3,n}|\geq (n-1)(2n-1) $, for $\forall n\geq 2$.	
		
		Furthermore, these  $(n-1)(2n-1)$ inertias in $\mathcal{N}_{3,n}$ are as follows.
		
		$(1,3n-4-j,j+3)$, $\forall 0\leq j\leq 3n-4$,
		
		$(2,3n-6-j,j+4)$, $\forall 0\leq j\leq 3n-6$,
		
		...
		
		$(n-1,n-j,n+1+j)$, $\forall 0\leq j\leq n$.
	\end{lemma}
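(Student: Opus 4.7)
The plan is to produce the $(n-1)(2n-1)$ listed inertias in $\mathcal{N}_{3,n}$ by lifting, for each $k \in \{1,2,\ldots,n-1\}$ separately, a single ``seed'' inertia out of $\mathcal{N}_{2,k+1}$ into $\mathcal{N}_{3,n}$ via Lemma~\ref{le:rel}(ii), and then counting. The crucial observation is that Lemma~\ref{le:rel}(ii) depends on the seed only through $a_1$ and through the sum $b_1+c_1 = m_1 n_1 - a_1$, so we do not need a specific seed for each $k$, only \emph{some} inertia with first coordinate equal to $k$.

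For the seed, I note that for each $k \in \{1,\ldots,n-1\}$ Lemma~\ref{le:mn}(ii) says an EW on $\bbC^2 \otimes \bbC^{k+1}$ can have as many as $(2-1)(k+1-1)=k$ negative eigenvalues, and this maximum is attained by a decomposable EW. Since the partial transpose of any NPT state is itself a decomposable EW, one realises this extremum as $\rho^\G$ for some $2 \times (k+1)$ NPT state $\rho$. This delivers an inertia $(k,b,c) \in \mathcal{N}_{2,k+1}$; the identity $k+b+c = 2(k+1)$ forces $b+c = k+2$ automatically.

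Next I apply Lemma~\ref{le:rel}(ii) with $(m_1,n_1)=(2,k+1)$ and $(m_2,n_2)=(3,n)$, whose hypotheses hold because $k+1 \leq n$. For every $l \in [0,\; 3n-2(k+1)] = [0,\; 3n-2k-2]$ the lemma returns
\[
(k,\; 3n-2k-2-l,\; b+c+l) \;=\; (k,\; 3n-2k-2-l,\; k+2+l) \;\in\; \mathcal{N}_{3,n},
\]
which, upon setting $j=l$, is exactly the $k$-th family in the statement. Summing the $3n-2k-1$ inertias produced for each $k$ gives
\[
\sum_{k=1}^{n-1}(3n-2k-1) \;=\; (n-1)(3n-1) - n(n-1) \;=\; (n-1)(2n-1),
\]
and these are pairwise distinct since different $k$ yield different first coordinates while different $l$ within a fixed family yield different third coordinates.

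The main obstacle I foresee is the seed step: Lemma~\ref{le:mn}(ii) is stated for general EWs rather than for partial transposes of NPT states, so one must argue that the maximum $v_- = k$ is actually attained \emph{within} the sub-class of $\rho^\G$'s on $\bbC^2 \otimes \bbC^{k+1}$. If that cannot be pulled out of Lemma~\ref{le:mn}(ii) directly, one falls back either on the complete classification of $\mathcal{N}_{2,n}$ in \cite{2020Inertias} (which already supplies $(2,0,4) \in \mathcal{N}_{2,3}$ and the analogous extremal $(k,\ast,\ast)$-inertias for larger $n$) or on explicit extremal constructions of $2 \times (k+1)$ NPT states in the spirit of the Werner-type and mixed-pure-state examples used elsewhere in the paper. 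Once the seed is secured, the rest of the argument is a mechanical application of Lemma~\ref{le:rel}(ii) plus the bookkeeping above.
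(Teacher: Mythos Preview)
Your approach is essentially the paper's: seed an inertia with first coordinate $k$ from the $2\times *$ classification in \cite{2020Inertias} and lift to $\mathcal{N}_{3,n}$ via Lemma~\ref{le:rel}(ii), then count. The paper phrases the seed as $(1,2n-4,3)\in\mathcal{N}_{2,n}$ and then writes ``similarly'' for $k\geq 2$; your choice to place the $k$-th seed in $\mathcal{N}_{2,k+1}$ (so that $b_1+c_1=k+2$) is actually cleaner, because then Lemma~\ref{le:rel}(ii) with $(m_1,n_1)=(2,k+1)$ produces the \emph{entire} range $j\in[0,3n-2k-2]$ in one application, whereas seeding in $\mathcal{N}_{2,n}$ only yields third coordinates $\geq 2n-k$ and leaves the small-$j$ end to be filled in separately.

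One correction on the seed step: the sentence ``since the partial transpose of any NPT state is itself a decomposable EW, one realises this extremum as $\rho^\Gamma$'' has the implication the wrong way round --- every $\rho^\Gamma$ is a decomposable EW, but not conversely --- so Lemma~\ref{le:mn}(ii) by itself does not hand you an NPT state. You already anticipated this, and your fallback is the right one: the classification of $\mathcal{N}_{2,n}$ in \cite{2020Inertias} supplies, for each $k\in\{1,\dots,n-1\}$, an inertia with first coordinate $k$ in $\mathcal{N}_{2,k+1}$, which is exactly the input the paper is also invoking.
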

\begin{proof}
	 	We have obtained that $(1,2n-4,3)\in \mathcal{N}_{2,n}$ \cite{2020Inertias}.  Using Lemma \ref{le:rel} we have that $(1,3n-4-j,j+3)\in  \mathcal{N}_{3,n}$, $\forall 0\leq j\leq 3n-4$.
	 	
	 Similarly, we also have that 	
	 
	 $(2,3n-6-j,j+4)\in  \mathcal{N}_{3,n}$, $\forall 0\leq j\leq 3n-6$,
	 
	 ...
	 
	 $(n-1,n-j,n+1+j)$, $\forall 0\leq j\leq n$.
	 
	 So we obtain that $\mathcal{N}_{3,n}$ obtains at least $(3n-3)+(3n-5)+...+(n+1)$, namely $(n-1)(2n-1)$ inertias.
\end{proof}

Finally, we partially investigate the existence of inertia $(4,1,4)$ and $(3,2,4)$ in terms of numerical test by using a python program. We utilize a program package named numpy \footnote{Oliphant,T.E. A guide to NumPy (Vol.1, p.85).USA: Trelgol Publishing.(2006).} to generate a $9\times3$ real random matrix $R$ \footnote{https://machinelearningmastery.com/how-to-generate-random-numbers-in-python/(2018).}. Then we multiply $R$ by its transpose $R^T$ to obtain a random positive semi-definite matrix $M=RR^T$ \footnote{https://en.m.wikipedia.org/wiki/Definite matrixNegative-definite.2C semidefinite and indefinite matrices(2002).}. After transforming $M$ into $M^\Gamma$, the program only need to calculate the feature values of $\rm M^\Gamma$ and judge its inertia index, which turns out to be neither of the two inertias $(4,1,4)$ or $(3,2,4)$. Such a numerical test supports the conjecture that neither of the two inertias exists.




\section{Conclusion}
\label{sec:con}
 We have investigated two-qutrit EWs constructed by the partial transpose of NPT states.  Furthermore, we investigated the inertias of the two-qutrit bipartite NPT states.  One open problem is to deterimine whether $(3,2,4), (4,1,4)$ are in $\mathcal{N}_{3,3}$. In the future, we need to extend more conclusions on the inertias from low dimensional states to high dimensional states.  

\bibliographystyle{unsrt}
\bibliography{changchun=inertia}

\end{document}